\def\equationautorefname~#1\null{%
  (#1)\null
}
\def\equationautorefname~#1\null{%
  (#1)\null
}
\newtheoremstyle{saber}% name
  {}%      Space above, empty = `usual value'
  {}%      Space below
  {\itshape}% Body font
  {}%         Indent amount (empty = no indent, \parindent = para indent)
  {\bfseries}% Thm head font
  {.}%        Punctuation after thm head
  {.5em}% Space after thm head: \newline = linebreak
  {}%         Thm head spec
\theoremstyle{saber}
\newtheorem{theorem}{Theorem}[section]
\newtheorem{corollary}{Corollary}[section]
\newtheorem{lemma}{Lemma}[section]
\newtheorem{definition}{Definition}[section]
\newcommand*{\theoremautorefname}{Theorem}
\newcommand*{\lemmaautorefname}{Lemma}
\newcommand*{\definitionautorefname}{Definition}
\newcommand*{\corollaryautorefname}{Corollary}
\newcommand*{\factautorefname}{Fact}
\newcommand*{\propertyautorefname}{Property}
\newcommand{\argmax}{\operatornamewithlimits{argmax}}
\newcommand{\argmin}{\operatornamewithlimits{argmin}}
\newcommand{\limsup}{\operatornamewithlimits{limsup}}
\newcommand{\liminf}{\operatornamewithlimits{liminf}}
\newcommand{\norm}[1]{\lVert#1\rVert}
\newcommand{\abs}[1]{\lvert#1\rvert}
\newcommand{\set}[1]{\left\{#1\right\}}
\newcommand{\mset}[1]{\lbrack #1\rbrack}
\newcommand{\etal}[1]{{\em #1 et al.}~}
\newcommand{\ie}{{\em i.e.,} }
\newcommand{\eg}{{\em e.g.,} }
\newcommand{\etc}{{\em etc.,} }
\newcommand{\wrt}{{\em w.r.t.} }
\newcommand{\dotprod}[2]{
  \langle #1, #2 \rangle
}
\newcommand{\iid}{i.i.d.}
\newcommand{\bigParenthes}[1]{
  \big(#1\big)
}
\newcommand{\bigBracket}[1]{
  \big\{#1\big\}
}
\newcommand{\bigSqBracket}[1]{
  \big[#1\big]
}
\newcommand{\BigParenthes}[1]{
  \Big(#1\Big)
}
\newcommand{\BigBracket}[1]{
  \Big\{#1\Big\}
}
\newcommand{\BigSqBracket}[1]{
  \Big[#1\Big]
}
\newcommand{\biggParenthes}[1]{
  \bigg(#1\bigg)
}
\newcommand{\biggBracket}[1]{
  \bigg\{#1\bigg\}
}
\newcommand{\biggSqBracket}[1]{
  \bigg[#1\bigg]
}
\newcommand{\BiggParenthes}[1]{
  \Bigg(#1\Bigg)
}
\newcommand{\BiggBracket}[1]{
  \Bigg\{#1\Bigg\}
}
\newcommand{\BiggSqBracket}[1]{
  \Bigg[#1\Bigg]
}
\newcommand{\bracket}[1]{
  \{#1\}
}
\newcommand{\parenthes}[1]{
  (#1)
}
\newcommand{\sqBracket}[1]{
  [#1]
}
\newcommand{\prob}[1]{\mathbb{P}[#1]}
\newcommand{\Prob}[1]{\mathbb{P}\big[#1\big]}
\newcommand{\expect}[1]{\mathbb{E}[#1]}
\newcommand{\Expect}[1]{\mathbb{E}\big[#1\big]}
\newcommand{\expectt}[1]{\mathbb{E}\bigg[#1\bigg]}
\newcommand{\walk}[1]{%
  \@tempswafalse
  \@for\next:=#1\do
    {\if@tempswa\!\!\rightarrow\!\!\else\@tempswatrue\fi\next}%
}
\newcommand{\seq}{\!=\!}
\newcommand{\sminus}{\!-\!}
\newcommand{\sm}[1]{\!#1\!}
\newcommand{\union}[2]{#1\!\cup\!#2}
\newcommand{\hl}[2]{{\color{#1}#2}}
\newcommand{\hlb}[1]{{\color{blue}#1}}
\newcommand{\hlr}[1]{{\color{red}#1}}
\newcommand{\hlp}[1]{{\color{purple}#1}}
\newcommand{\hlc}[3]{{\color{#1}#2 [remark: #3]}}
\title{A Dyadic Simulation Approach to Efficient Range-Summability}
\author{Jingfan Meng}{School of Computer Science, Georgia Institute of Technology, USA}{jmeng40@gatech.edu}{}{}
\author{Huayi Wang}{School of Computer Science, Georgia Institute of Technology, USA}{huayiwang@gatech.edu}{}{}
\author{Jun Xu}{School of Computer Science, Georgia Institute of Technology, USA}{jx@cc.gatech.edu}{}{}
\author{Mitsunori Ogihara}{Department of Computer Science, University of Miami, USA}{ogihara@cs.miami.edu}{}{}
\begin{document}

%% title and authors

\authorrunning{J. Meng and H. Wang and J. Xu and M. Ogihara} %TODO mandatory. First: Use abbreviated first/middle names. Second (only in severe cases): Use first author plus 'et al.'

\Copyright{Jingfan Meng, Huayi Wang, Jun Xu, and Mitsunori Ogihara} %TODO mandatory, please use full first names. LIPIcs license is "CC-BY";  http://creativecommons.org/licenses/by/3.0/

\ccsdesc[500]{Theory of computation~Streaming, sublinear and near linear time algorithms}
\ccsdesc[300]{Mathematics of computing~Random number generation} %TODO mandatory: Please choose ACM 2012 classifications from https://dl.acm.org/ccs/ccs_flat.cfm \textcolor{red}{Replace ccsdesc macro with valid one}

\keywords{fast range-summation, locality-sensitive hashing, rejection sampling} %TODO mandatory; please add comma-separated list of keywords

\relatedversion{}
\funding{}
\acknowledgements{}

\maketitle
\begin{abstract}
Efficient range-summability (ERS) of a long list of random variables is a fundamental algorithmic problem that has applications to three important database applications, namely, data stream processing, space-efficient histogram maintenance (SEHM),  and approximate nearest neighbor searches (ANNS). In this work, we propose a novel dyadic simulation framework and develop three novel ERS solutions, namely Gaussian-dyadic simulation tree (DST), Cauchy-DST and Random Walk-DST, using it. We also propose novel rejection sampling techniques to make these solutions computationally efficient. Furthermore, we develop a novel $k$-wise independence theory that allows our ERS solutions to have both high computational efficiencies and strong provable independence guarantees. %Finally, we propose two nontrivial applications of our ERS solutions in the areas of data streaming and ANNS. using Locality-Sensitive Hashing (LSH)
	%\emph{dyadic simulation}, a novel solution framework to ERS that extends and improves existing frameworks in a fundamental and systematic way, and develop three novel ERS solutions for different underlying distributions. We show that our solutions are better than all existing ones in that the underlying random variables thus generated are independent empirically and can be made provably so at a small cost. Furthermore, we develop a novel $k$-wise independent theory of DSTs that provide both high computational efficiency and strong provable
	%independence guarantees.
	%Additionally, as novel applications of our ERS solutions, we propose two range-efficient data streaming algorithms that can process arbitrary range updates.  
	%Finally, as a novel application of our  solution for ERS-Gaussian to ANNS-$L_1$, 
	%we explain why in RW-LSH we cannot compute random walks using EH3, and propose GRW-LSH as its replacement that has nearly the same efficacy, 
	%requires virtually zero storage, and is usually more efficient computationally.
\end{abstract}

%for processing data streams with 
%In this work, we make three major contributions.  First, we propose 

%\setstretch{2}
%\section*{Version 3}
\section{Introduction}

In this work, we propose {\it dyadic simulation}, a novel solution framework to a fundamental algorithmic problem that has applications to three important database applications:  data stream processing, space-efficient histogram maintenance (SEHM) and approximate nearest neighbor searches (ANNS).
%using Locality-Sensitive Hashing (LSH)similarity searches.  
This algorithmic
problem, called \emph{efficient range-summability} (ERS) of random variables (RVs)~\cite{feigenbaum, rusu-jour}, can be stated as follows.
Let $X_0, X_1, \cdots, X_{U-1}$ be a list of i.i.d. RVs, where the (index) universe size $U$ is typically a large number (say $U = 2^{64}$).  
Given a range $[a, b) \triangleq \{a, a+1, \cdots, b-1\}$ that lies in $[0, U)$, 
we need to
% (per ``requests'' from a database 
%application) somehow 
compute $S[a, b) \triangleq \sum_{i=a}^{b-1}X_i$, the sum of the RVs $X_a, X_{a+1}, \cdots, X_{b-1}$ in the range.
A straightforward but naive solution to this problem, which follows an intuitive ``bottom-up'' approach, is to generate RVs 
$X_a, X_{a+1}, \cdots,$ $X_{b-1}$ individually and then add them up.
This solution, however, has a time complexity of
$O(b - a)$, which is inefficient computationally when the range length $b - a$ is large.   In contrast, an acceptable solution~\cite{feigenbaum, rusu-jour} 
should be able to do so with only O(polylog($b-a$)) time complexity.  

%This naive solution follows an intuitive ``bottom-up'' approach:  The constituent RVs in the range are individually realized and then added up.

\subsection{Our Dyadic Simulation Approach}\label{sec:dyaappoach}

%(will justify in Section\,\ref{sec:overview}) 
For ease of presentation, we make two harmless simplifying assumptions.   
%We will explain in Section\,\ref{sec:overview} that neither assumption restricts the power of our solution 
%in any way.  
The first assumption is a typical ``computer science'' one:  The universe size $U$ is a power of 2.  This assumption can always be fulfilled by increasing $U$ to at most $2U$. The second assumption is that 
$[a, b)$ is a dyadic range in the sense there exist integers $j \ge 0$ and $i \ge 0$ such that $a = j \cdot 2^i$ and $b = (j+1) \cdot 2^i$.  
It suffices for our solution to work for any dyadic range since any non-dyadic range can be split into at most $2\log_2 U$ dyadic ranges,
as we will elaborate in \autoref{sec:dyadicframework}.

Unlike the naive solution, our dyadic simulation approach computes $S[a, b)$ in a counter-intuitive ``top-down'' manner as follows.  
%In a dyadic simulation solution, 
Its first step is to generate the RV $S[0, U)$, the range-sum of the entire universe.
If we denote the distribution of each underlying RV $X_i$ as $X$, 
%(\textbf{which we call the \emph{target distribution}, we drop $X$ in the notation of range-sums when it is obvious from context}) 
then $S[0, U)$ has distribution $X^{*U}$, 
where, for any $n > 1$, $X^{*n}$ denotes the $n^{th}$ convolutional power of $X$.
% (i.e., the result of applying self-convolution $n$ times to $X$).
When $X$ is one of a few nice distributions, the distribution $X^{*U}$ can be analytically derived and also takes a nice form;
in this case, it is straightforward to generate $S[0, U)$.
For example, when $X$ is standard Gaussian distribution $\mathcal{N}(0, 1)$, then $X^{*U}$ is $\mathcal{N}(0, U)$. 

The rest of dyadic simulation proceeds as follows.
If $[a, b)$ is the same as $[0, U)$, then the ERS problem is solved.
% as we have just generated (computed) $S[0, U)$.
% since we have obtained $S[a, b) = S[0, U)$.  
Otherwise, we split $S[0, U)$ into two half-range-sums $S[0, U/2) + S[U/2, U)$, such that RVs $S[0, U/2)$ and $S[U/2, U)$
are (mutually) independent and each has distribution $X^{*(U/2)}$.   While this may sound wishful thinking, we will show in~\autoref{sec:dyasim} that 
it is always mathematically possible and can be done in a computationally efficient manner in some cases.

%Furthermore, {\color{red} [Rejection-sampling is also $O(1)$ in terms of $U$]. when $X$ is one of the few nice distributions}, we can perform this split
%efficiently computationally, practically in $O(1)$ time.  

After the split, we have either $[a, b) \subseteq [0, U/2)$ or $[a, b) \subseteq [U/2, U)$, since $[a, b)$, $[0, U/2)$, and $[U/2, U)$ are
all dyadic intervals.  
We then recursively ``binary-search'' for $[a, b)$ either in the left-half $[0, U/2)$ if $[a, b) \subseteq [0, U/2)$ or in the right-half $[U/2, U)$ if
$[a, b) \subseteq [U/2, U)$.
It is not hard to 
verify that after at most $\log_2 U$ such splits we can ``find'' $[a, b)$ and as a result compute $S[a, b)$.  
%We will show for certain underlying distributions of X, 
%the time complexity of this split is $O(1)$, so _2_2_2
Hence, the time complexity of a dyadic simulation algorithm is 
$O(\log U)$ splits for generating any dyadic range-sum.   Perhaps surprisingly, even for generating any range-sum that is not necessarily dyadic,
the time complexity remains $O(\log U)$ splits instead of becoming $O(\log^2 U)$, as we will show in~\autoref{sec:dyadicframework}.

We can generate any underlying RV $X_i$ via $\log_2 U$ such binary splits, 
because $X_i \equiv S[i, i+1)$, and $[i, i+1)$ is a dyadic range.  We say dyadic simulation takes a ``top-down'' approach 
because when all the underlying RVs $X_0, X_1, \cdots, X_{U-1}$ are generated this way, they become the ``leaves'' (at the ``bottom'') of 
the complete binary tree of the binary splits involved in generating them.  This tree, called \emph{dyadic simulation tree} (DST), will be officially introduced in~\autoref{sec:dyadicframework}.  
In this work, we propose novel DST-based solutions to three ERS problems 
whose underlying RVs have Gaussian, Cauchy, and single-step random walk (RW) (\emph{aka}. Rademacher) distributions, respectively.   
%In these three problems, each dyadic range-sum has Gaussian, Cauchy, and multi-step random walk distributions.   
%These three solutions, called Gaussian-DST, Cauchy-DST, and RW-DST respectively, are all computationally efficient:
We also propose novel rejection sampling techniques that make these three solutions, called Gaussian-DST, Cauchy-DST, and RW-DST respectively, computationally efficient.
%we propose highly efficient techniques for splitting a dyadic range sum, 
%which has Gaussian, Cauchy, and multi-step random walk distributions respectively:  
Each binary split operation takes only
nanoseconds for Gaussian and 20+ nanoseconds for Cauchy and random walk.

%{\color{red} but when $X$ is Cauchy, the split operation involves a Monte-Carlo simulation procedure called rejection sampling.  }

% with these three  
%distributions respectively.   Each split can be done in only nanoseconds.  

All existing ERS solutions were proposed for the single-step random walk distribution $\Pr[X = 1] = \Pr[X = -1] = 0.5$.  
%Before we describe our dyadic simulation solutions and their applications, 
Here we highlight a key difference between our dyadic simulation approach and these ERS solutions.
This difference is a major contribution of this work.  
The underlying RVs $X_0$, $X_1$, $\cdots$, $X_{U-1}$ generated by our dyadic simulation approach 
are at least empirically independent for all practical purposes.
In contrast, those generated by all existing ERS solutions are strongly correlated.
%because they all achieve ERS
%by imposing a structural statistical dependence among $X_1$, $X_2$, $\cdots$, $X_{U-1}$, 
%using a hash function $h$ that encodes this dependence.  
For example, in the EH3 scheme proposed in~\cite{feigenbaum, rusu-jour}, 
%each such hash function $h$ is 
%BCH-code-based, which has a strong dependence structure necessary for the original purpose of error-correction.
%As a result, 
the underlying RVs are approximately 4-wise independent, but all independence beyond 4-wise is completely destroyed.  
However, in nearly all applications of dyadic simulation that we will describe next, we need these RVs to be at least empirically independent.
%This makes EH3 unsuitable for the ANNS application that we will describe shortly.

%This arrangement works perfectly for applications that require only 4-wise independence, such as the estimation of the 
%$L_1$ difference between two data streams~\cite{feigenbaum}.  
%Consequently, when EH3 is used to augment a multi-probe LSH solution for 
%ANNS in the Manhattan ($L_1$) distance, the resulting solution has much worse performance than if the underlying RVs are empirically independent
%(beyond 4-wise).  

%, even when 
%``off-the-shelf'' hash functions such as wyhash~\cite{wyhash} are used for the binary splits.  

A very sketchy idea of dyadic simulation was proposed, in a few
sentences, in a theory paper~\cite{dyahist} that mainly focused on the aforementioned SEHM problem.   
Although it was stated in~\cite{dyahist} that dyadic simulation can possibly be used for the ERS of Gaussian and Cauchy RVs, 
no computationally efficient technique was specified in it for binary-splitting a Gaussian or Cauchy RV, as will be elaborated in Subsections~\ref{sec:gauss} and~\ref{sec:cauchy}.

%and the suggested technique for binary-splitting a Cauchy RV is not computationally efficient as we will elaborate in \autoref{sec:cauchy}.

%The ``footprint'' of dyadic simulation is tiny in~\cite{dyahist}, since the focus of~\cite{dyahist} is on a very different problem in data streaming 
%called space-efficient histogram maintenance (SEHM), of which ERS is only a tiny building block.  
%Likely for this reason, although~\cite{dyahist}
%has been cited hundreds of times for SEHM, to the best of our knowledge, 
%the dyadic simulation idea has never been cited, or followed up, for ERS. 

%it has never been cited for the (seminal) contributions to ERS and 

% work on the idea exists.  

%[mostly spent] ===============

%in \textbf{data streaming (estimation of $L_1$ difference between two streams)}, which requires only 4-wise independence.   
% generates  that are even empiricially.  

%==============================

\subsection{Independence Guarantees}\label{sec:indepguarantee}

%\subsection{ERS Solutions and Implementation Issues}

%(Pr[X = 1] = Pr[X = -1] = 0.5) 
 
%Indeed, in this work, we propose dyadic-simulation-based solutions to three ERS problems Empirical and Provable I
%whose underlying RVs have Gaussian, Cauchy, and single-step random walk distributions, respectively.  

As we have just explained, each non-leaf node in a DST corresponds to a dyadic range $[a, b)$, whose two children correspond
to the two dyadic half ranges $[a, (a+b)/2)$ and $[(a+b)/2, b)$.  
We will show in~\autoref{sec:erssolutions} that each such non-leaf node, now identified by its corresponding dyadic range say $[a, b)$, 
is associated with a uniformly random binary string $C_{[a, b)}$ that determines
the values of half-range-sums $S[a, (a+b)/2)$ and $S[(a+b)/2, b)$ that the range-sum $S[a, b)$ is split into.   
Depending on how each $C_{[a, b)}$ is generated, we can obtain various theoretical guarantees concerning how independent the 
underlying RVs $X_0, X_1, \cdots, X_{U-1}$ are.

Ideally, each such $C_{[a, b)}$ should be a freshly generated RV in the sense that it is independent of all other RVs.  If this is the case, then we can prove that,
starting with $S[0, U)$ that is distributed as $X^{*U}$, 
the $U$ underlying RVs generated through these binary splits are i.i.d. with distribution $X$.  However, this idealized case is impractical when the universe
size $U$ is massive, since 
%It is prohibitively expensive to generate a 
%massive number of fresh RVs since the (generated) 
the value of each fresh generated RV would all have to be remembered (stored in memory) and there can be a massive number of them.   
In practice, we typically generate each such $C_{[a, b)}$ value (on demand) by applying a hash 
function $h(\cdot)$ to the dyadic range $[a, b)$.  There are two standard choices of such hash functions in the literature.   
The practical type is ``off-the-shelf'' random hash functions that can produce a hash value in nanoseconds, such as \texttt{wyhash}~\cite{wyhash}.
Although they provide no theoretical guarantees, they were demonstrated to ensure a level of empirical independence that is good enough for all practical applications~\cite{smhasher}.
The theoretical type, called $k$-wise independent hash functions~\cite{univhash,tabulation4wise,tabulationpower}, generates $(C_{[a,b)})$'s that are $k$-wise independent.
% (any $k$ of them are mutually independent).  
In this work, we establish a novel $k$-wise independence theory for DST which shows, among other things, that $k$-wise independence among $C_{[a, b)}$ values implies $k$-wise independence among the underlying RVs.   
Although the latter theoretical guarantee is weaker than the ideal all-wise mutual independence, it leads to rigorous theoretical guarantees that are strong enough 
for most ERS applications.

We note all our DST solutions can use Nisan's pseudorandom generator (PRG)~\cite{nisan},
which delivers strong independence guarantees for memory- (state-space-) constrained algorithms. % (such as data streaming algorithms).
% mentioned in~\cite{dyahist}.
%Nisan's PRNG is a standard choice for data streaming algorithms that demand strong 
%	independence guarantees (e.g., in~\cite{stabledist}), but 
However, Nisan's PRG is quite computationally intensive, and hence has never been implemented and used in practice. 
Indeed, a key contribution of our $k$-wise independence theory lies in its ability to satisfy the ``theoretical needs'' of most ERS applications 
using $k$-wise independent hash functions that are much less computationally intensive.

\subsection{Applications}\label{sec:application}
In this section, we describe the three aforementioned applications that motivate our DST-based ERS solutions.
Since we claim none of them as a contribution of this work, each description here is only detailed enough to explain how an ERS problem arises in it.
Furthermore, we will not elaborate on any application in the rest of this paper.
% except in {\bf Appendix ??????}, where we explain how our Gaussian-DST solution 
%solves the ERS problem arising in an ANNS problem 

%: data stream processing, efficient histogram maintenance in the data streaming setting, and approximate nearest neighbor search (ANNS).} 
%\textbf{JM: I did not talk about independence guarantees (like k-wise or Nisan) in this scheme!}
%In this section, we introduce a novel application of two of our ERS solutions to data stream processing.
%the two aforementioned database areas, namely, data stream processing and similarity searches.  
%To this end, we start with an oversimplified characterization of the data stream model.  
The \emph{first} application is data stream processing, where two of our ERS
	solutions extend an existing data streaming algorithm suite for efficiently handling range-updates. We start our introduction with an oversimplified characterization of the data stream model.
In this model, the precise system state
is comprised of a large number (say $U$) of counters $\sigma_0$, $\sigma_1$, $\cdots$, $\sigma_{U-1}$ whose values are initialized to 0.  A data stream is comprised of a large
number of data items that can take one of the following two forms:  {\it standard (point-update)} and {\it range-update}. In a standard data stream,
%which may take one of the following two forms
each item, say the $t^{th}$, in 
the data stream is in the form $(i_t, \delta_t)$.  This data item should cause the following update to the precise system state:   Counter $\sigma_{i_t}$ is to be incremented by $\delta_t$, which we call a {\it point update}. In a range-update data stream, which is more general (than standard data streams), each data item is in the form $\allowbreak([a_t, b_t), \delta_t)$.  In this case, for each index $i$ in the range $[a_t, b_t)$, 
the corresponding counter $\sigma_i$ needs to be incremented by 
$\delta_t$, which we call a {\it range update}.
A typical data streaming query is to estimate 
a certain function of the counter values after the updates caused by all the data items in the data stream are committed to the system state.  
For example, the $L_2$-norm and the $L_1$-norm estimation problems are
to estimate the values of $d_2 \triangleq (\sum_{i=0}^{U-1}|\sigma_i|^2)^{1/2}$ 
(the $L_2$-norm of the system state) and $d_1 \triangleq \sum_{i=0}^{U-1}|\sigma_i|$ (the $L_1$-norm), respectively. 
Since $U$ is usually too huge for the precise system state to fit in fast memory, a data streaming algorithm has to summarize it into a synopsis data structure called a \emph{sketch}, whose size is much smaller than $O(U)$.

%{\color{red} In a standard data stream, each data item $(i_t, \delta_t)$ triggers an update only to the counter $\sigma_{i_t}$, which we call a {\it point update}.  }In a range-update data stream,

%A data streaming algorithm is typically designed to handle only a stream of point updates.  Our ERS solutions enable a data streaming algorithm to efficiently process a more general . 

A data streaming algorithm suite, proposed in~\cite{stabledist}, solves the $L_2$- and the $L_1$-norm estimation problems for standard data streams.
It employs a Gaussian sum or Cauchy sum sketch comprised of $r>0$ i.i.d. accumulators (viewed as RVs) $A_1, A_2, \cdots, A_r$.  
%After a data stream has passed, the values of these $r$ accumulators serve as the raw statistics that the $L_p$-norm of the i.i.d.
%data streaming shall be estimated from, using an estimator that is a (deterministic) function of the raw statistics.$A_1, A_2, \cdots, A_r$, in the original algorithm~\cite{stabledist}Hence
% By 
Since these accumulators are independent and functionally equivalent, 
it suffices to describe the point-update procedure for one such accumulator, 
which we denote as $A$.  $A$ is initialized to $0$ at the beginning.  
Given a point update $(i_t, \delta_t)$, $A$ is incremented by $\delta_t  X_{i_t}$, where $X_{i_t}$ is 
a standard Gaussian (for $L_2$-norm) or Cauchy (for $L_1$-norm) RV that is fixed after being generated on-demand
for the first time and is associated with 
the counter $\sigma_{i_t}$. 
After the entire data stream has passed, it was shown in~\cite{stabledist} that $A = \sum_{i=0}^{U-1}\sigma_i X_i$ is distributed as a Gaussian RV $\mathcal{N}(0, d_2^2)$ or a Cauchy RV 
$\mathrm{Cauchy}(0, d_1)$, wherein the parameters $d_2^2$ and $d_1$ can be estimated using standard estimators.
%Since the distribution of $A$ is parameterized only by the $L_2$- or $L_1$-norm of the data stream, respectively, the norms can be estimated from the accumulators by standard estimators of the Gaussian variance (for $L_2$) and Cauchy scale (for $L_1$).
%where $\sigma_i$ denotes also the value of the $i^{th}$ counter $\sigma_i$ at the end.  

This algorithm can handle a range update $\allowbreak([a_t, b_t), \delta_t)$ as follows:
\begin{center}
\fbox{  \begin{minipage}{0.45\textwidth} \centering For $ i = a_t$ to $b_t- 1$, do $A\gets A + \delta_t  X_i$. \end{minipage}
}
\end{center}
 %  Using a standard data streaming algorithm for point updates, 
However, the time complexity of this update procedure is $O(b_t - a_t)$, 
which is very high when $b_t - a_t$ is gigantic.  
In comparison, our Gaussian-DST and Cauchy-DST solutions can process this range update in $O(\log(b_t - a_t))$ time, 
since the net effect of this range update is to increment $A$ by $\delta_t \cdot (\sum_{i=a_t}^{b_t-1} X_i)$, which is precisely $\delta_t$ times the (Gaussian or Cauchy) 
range-sum $S[a_t, b_t)$.

The \emph{second} application is the space-efficient histogram maintenance (SEHM) problem in the data streaming setting, which as mentioned earlier was
the focus of~\cite{dyahist}.  
The precise system state to be approximately maintained by a proposed SEHM solution is a scaled
probability mass function (pmf) $f(\cdot)$ whose domain is the set of integers $\{0, 1, 2, \cdots, U-1\}$, where the universe $U$ is typically a large (positive) integer;
we denote this domain simply as $[0, U)$.   This $f(\cdot)$ starts as a zero function, and at any moment $\tau$, $f(\cdot)$ is defined by a stream of point updates before or at $\tau$
in the sense each point update $(i_\tau, \delta_\tau)$ causes the value of $f(i_\tau)$ to be incremented by $ \delta_\tau$.   Hence $f(\cdot)$ is a ``pmf in motion''.

A part of the SEHM problem is to answer the following query.  At any given moment $\tau$, the proposed SEHM solution needs to approximately
represent the snapshot of $f(\cdot)$ at $\tau$ using a good and simple \emph{histogram} function whose domain is also $[0, U)$.  
Here, a histogram $\mathbf{H}$ is a piecewise-constant function defined by $B$ non-overlapping intervals (buckets) $I_1, I_2, \cdots, I_B$ that comprise $[0, U)$ and
$B$ \emph{spline parameters} $\chi_1, \chi_2, \cdots, \chi_B$ that define the height of each bucket, as follows:  $\mathbf{H}(i) = \chi_j$ 
when $i\in I_j$, for $i = 0, 1, \cdots, U-1$.   The approximation error of $\mathbf{H}$ (relative to $f(\cdot)$) is defined as the $L_2$-error $(\sum_{i=0}^{U-1}|\mathbf{H}(i) - f(i)|^2)^{1/2}$ 
or the $L_1$-error $\sum_{i=0}^{U-1}|\mathbf{H}(i) - f(i)|$.  A histogram $\mathbf{H}$ is called simple when $B$ is small and called good when its approximate error is small.

A subproblem of this query problem is, given a (simple) candidate histogram $\mathbf{H}$, to determine whether it is good in terms of $L_2$- or $L_1$-error.
It was shown in~\cite{dyahist} that the SEHM problem can be solved by maintaining a Gaussian-sum (for the $L_2$ case) or a Cauchy-sum (for the $L_1$ case) sketch of $f(\cdot)$.
In addition, for solving this subproblem given a candidate histogram $\mathbf{H}$, a Gaussian-sum or Cauchy-sum sketch of $\mathbf{H}$ needs to be computed.
Suppose $\mathbf{H}(i) = \chi_j$ 
when $i\in I_j$ and $I_j = [a_j, a_{j+1})$.  Then
the value of an accumulator $A$ in the sketch of $\mathbf{H}$ takes value $A = \sum_{j=1}^{B}\chi_j S[a_j, a_{j+1})$ (as explained above), where each $S[a_j, a_{j+1})$
is a Gaussian or Cauchy {\it range-sum} that needs to be {\it efficiently computed}.
It was shown in~\cite{dyahist} that the $L_2$- or $L_1$-error of approximating $f(\cdot)$ by $\mathbf{H}$ can be estimated from the
difference between the sketches of $f(\cdot)$ and $\mathbf{H}$.

\begin{comment}
Note that if each histogram bucket $(I_j, \chi_j)$ is treated as a range-update on range $I_j$ with increment $-\chi_j$ and is committed to the system state, the ($L_2$ or $L_1$) approximation error of $\mathbf{H}$ becomes the (corresponding) norm of the system state. As a result, finding the a ``good'' histogram $\mathbf{H}$ with low approximation error is reduced to finding $B$ disjoint buckets and their spline parameters that would significantly decrease the norm of the system state.
In \cite{dyahist}, a generic approach is shown that greedily finds near optimal buckets and spline parameters based on the aforementioned ($L_2$ or $L_1$) norm estimation algorithm for range-update data streams, so the second application is reduced to the first application for ERS solutions on Gaussian and Cauchy distributions. 
\end{comment}

We now shift our attention to the third %another 
application of ERS:   Locality-Sensitive Hashing (LSH) schemes for approximate nearest neighbors searches (ANNS).  An 
ERS problem arises in efficiently implementing a state-of-the-art LSH solution, called multi-probe random-walk LSH 
(MP-RW-LSH)~\cite{mprwlsh}, for 
ANNS in Manhattan ($L_1$) distance.  As explained in~\cite{mprwlsh}, to compute the value of a random-walk LSH (RW-LSH) function acting on a query vector (as its argument), we need to map an 
(arbitrarily) given
nonnegative even integer $\phi$ to a $\phi$-step random walk.   Since $\phi$ can be very large, this computation is precisely an ERS problem with $X$ being a single-step random walk.
The aforementioned EH3 scheme~\cite{feigenbaum} does not work for this ERS problem for the following reason. 
%MP-RW-LSH performs poorly when EH3, the aforementioned ERS solution for random walks, is used to compute such a $\phi$-step random walk (range-sum).   
It was shown in~\cite{mprwlsh} that, for MP-RW-LSH to work properly, the probability distribution of any computed range-sum $S[a, b)$ must be either identical or close to that of a $(b-a)$-step random walk.
This requirement, however, is not generally satisfied by EH3, which destroys all independence beyond $4$-wise.
In contrast, according to~\autoref{cor:stdist} (in~\autoref{sec:dstimpl}), our Random Walk (RW)-DST solution strictly satisfies this requirement when it is implemented using $2$-wise independent hash functions.

In this work, we make two major and nontrivial contributions.  First, we propose a dyadic simulation framework and develop three novel and computationally  efficient   ERS solutions, namely 
Gaussian-DST, Cauchy-DST and RW-DST, based on it.  
%Our solutions are better than all existing ones in that the underlying RVs thus generated are mutually independent empirically and can be made provably so at a small
%cost.
Second, we establish a novel $k$-wise independence theory that allows our ERS solutions to have both strong provable
independence guarantees and low computational complexities.

\section{Dyadic Simulation Theory}\label{sec:dyasim}
%Framework and Solutions

In this section, we first describe how to generate an arbitrary dyadic range-sum using a \emph{dyadic simulation tree} (DST)
of binary splits.  After that, we describe
three aforementioned DST-based \emph{efficient range-summability} (ERS) solutions for three different target distributions. These three solutions, called Gaussian-DST, Cauchy-DST, and RW-DST 
(RW for random walk) respectively, follow a common framework and differ only in the binary split procedure.
In the rest of the paper, whenever possible, we focus on the design and the efficient implementation of only a single instance of DST.  
%provides the same interface for applications as a hash function. 
% with guarantees on $k$-wise independence and the distribution of range-sums. \autoref{sec:datastreaming} and , as we will elaborate in \autoref{sec:ann}. 
A real-world application usually needs to use many DST instances~\cite{stabledist, dyahist, mprwlsh}.
These DST instances are 
independent in the sense that the full vector of underlying RVs $ X_0, X_1, \cdots, X_{U-1}$ generated by them are independent.  

Before we describe the dyadic simulation approach, we state the precise problem statement of ERS, which consists of three requirements.
First, the underlying RVs $X_0, X_1, \cdots, X_{U-1}$ 
are i.i.d. with distribution $X$.
Second, every range-sum $S[a, b)$ is equal to $X_a + X_{a+1} + \cdots + X_{b-1}$.
Third, given any range $[a, b)$, its range-sum $S[a, b)$ can be computed in $O(polylog(b - a))$ time.
Whereas the second and the third requirements are straightforward to satisfy, to {\it provably} satisfy the strict independence part of the first requirement, 
we have to make an idealized assumption that we will elaborate in  \autoref{sec:dyadicframework}.
%only in terms of 
%storage cost, all three ERS solutions perform the correct computations w.r.t. the problem statement above.  . whichwe call 

%can be removed, when the first requirement is slightly weakened, approximately i.i.d.}, and all these solutions 
%become practical.  
\begin{comment}
[old content: consistency] Among the three parts, only the first part is ``negotiable'' in the following sense: 
Since it is usually too expensive, in terms of storage cost, to achieve strict i.i.d. among the underlying random variables, 
we usually relax it to either approximately i.i.d. with certain theoretical guarantees or empirically i.i.d. that can be achieved
by good hash functions.  

[delete this?] S in ERS stands either for sum or for summability according to the context. 
\end{comment}
%at the only cost of the i.i.d. requirement being four ERS or ERM

%underlying the range sums and minimums are strictly 
%i.i.d.  (Theorem ???). or minimum

As mentioned earlier, each range in $[0, U)$ can be partitioned into disjoint dyadic ranges.
Such a partitioning can usually be done in multiple ways, but only one such way 
%is minimum in that every other way can be regarded 
%as further partitioning the minimum one into more dyadic ranges.
results in the minimum number of partitions.
This minimum partitioning is called the \emph{dyadic cover}, which contains at most $O(\log U)$ dyadic ranges~\cite{rusu-jour}.
For example, the dyadic cover of $[4,11)$ contains three dyadic ranges:  $[4,8)$, $[8, 10)$, and $[10, 11)$.
In the rest of the paper, we only show how to compute the range-sum for a dyadic range, since the range-sum of an arbitrary range $[a, b)$
is the sum of the range-sums of the dyadic ranges in the dyadic cover of $[a, b)$.  
Also as explained earlier, for notational convenience and ease 
of presentation, we assume that the universe range $U$ is a power of $2$.

\subsection{Dyadic Simulation Framework}\label{sec:dyadicframework}

In this section, we describe the dyadic simulation framework, and 
%will show that, through a series of binary splits, how an arbitrary underlying RV can be generated and 
prove that a DST-based ERS solution satisfies all three requirements specified earlier.
%the $U$ RVs $X_0$, $X_1$, $\cdots$, $X_{U-1}$ each generated this way are i.i.d. with distribution $X$.
%We only show how to perform this split on a single dyadic range, since this procedure can be performed recursively.  
%As explained earlier, under the dyadic simulation framework, 
%the value of a dyadic range-sum $S[a, b)$ can be derived from the universe-sum $S[0, U)$, through a series of binary splits.
%The set of binary splits involved in obtaining every dyadic-sum form a tree structure (DST) as follows. {\color{red}
%The root corresponds to the split of $[0, U)$ into $[0, U/2)$ and $[U/2, U)$, its left and right children correspond to 
%the splits of $[0, U/2)$ (into $[0, U/4)$ and $[U/4, U/2)$) and $[U/2, U)$ (into $[U/2, 3U/4)$ and $[3U/4, U)$) respectively, and so on.   For example, the 16-%node small universe example in \autoref{fig:prefix}  shows how the universe-sum is split into $16$ underlying RVs.}
We illustrate a DST using a ``small universe'' example (with $U$ = 16) shown in \autoref{fig:prefix}.  Sitting at the root of the tree is the $S[0, 16)$,
which has distribution $X^{*16}$ by initialization.  
Its two children are the two half-range-sums $S[0, 8)$ and $S[8, 16)$ resulting from splitting $S[0, 16)$,
its four grandchildren are the four 
quarter-range-sums $S[0, 4)$, $S[4, 8)$, $S[8, 12)$ and $S[12, 16)$ resulting from splitting $S[0, 8)$ and $S[8, 16)$ respectively, and so on.   
At the bottom of the tree are the sixteen underlying RVs 
$X_0$, $X_1$, $\cdots$, $X_{15}$.  

\begin{figure}
	\centering
	\includegraphics[width=\textwidth]{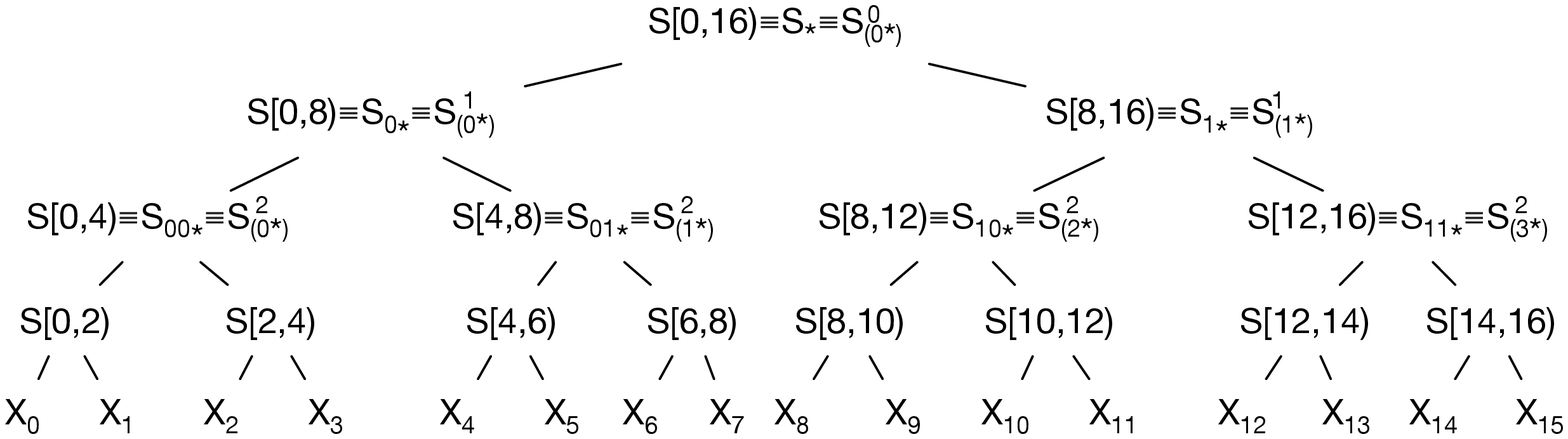}
	\caption{An illustration of the DST. }\label{fig:prefix}
\end{figure}

%{\bf [moved from last]which is generated at the beginning }, which are  in number
Under this modeling, every dyadic range-sum, including every underlying RV, 
corresponds to a node in this tree and its value is generated by binary-splitting all its ancestors.  
%For example, the dyadic range $S[0, 4)$ is framed in \autoref{fig:prefix}. 
%To generate the corresponding dyadic-sum, we first generate $S[0, 16)$ according to the
%distribution $X^{*16}$, split $S[0, 16)$ into $S[0, 8)$ and $S[8, 16)$, and then split $S[0, 8)$ into $S[0, 4)$,  which is what we want, and $S[4, 8)$.  
The computational complexity of generating a dyadic range-sum is clearly $O(\log U)$ splits.
% because the number of ancestors is no more than $\log_2 U$.
~\autoref{cor:2logntime} states the aforementioned surprising result that the computational complexity of generating the sum of any range  
(that is not necessarily dyadic) is also $O(\log U)$ splits. Hence a DST-based solution satisfies the third requirement above.
%An informal proof of~\autoref{cor:2logntime} here since it is embedded in 
The remark after the proof of~\autoref{th:twowise} gives an informal proof of~\autoref{cor:2logntime}.
In addition, under this dyadic simulation framework, the (dyadic) range-sum value of each non-leaf tree node is the sum of its two children.  
As a result, every dyadic range-sum $S[a, b)$ computed this way is indeed equal to $X_a +X_{a+1} + \cdots + X_{b-1}$.  Hence the second requirement
above is satisfied.

\begin{corollary}\label{cor:2logntime}
	For any integers $a, b$ such that $0 \leq a \leq b \leq U$, the range-sum $S[a, b)$ can be computed in no more than $2\log_2 U$ splits.
\end{corollary}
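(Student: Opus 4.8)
The plan is to show that, although the dyadic cover of $[a,b)$ may contain as many as $2\log_2 U$ separate dyadic ranges, and although each such range individually requires $O(\log U)$ splits to generate (the naive product giving the feared $O(\log^2 U)$), the splits needed to produce all of them \emph{simultaneously} overlap massively inside the DST, so the true total is only $O(\log U)$. First I would recall two facts already established in the excerpt: each dyadic range in the cover is a node of the DST, and generating a node's value amounts to performing one split at each of its ancestors while descending from the root. The key move is therefore to stop thinking about the individual root-to-node paths and instead consider their union, namely the subtree $T'$ of the DST consisting of every node that is either a range in the dyadic cover of $[a,b)$ or an ancestor of one. Because a single split produces both children of a node at once, the number of splits required to realize all the values in the cover equals exactly the number of internal nodes of $T'$, i.e. the nodes of $T'$ that are not themselves leaves of $T'$.

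Next I would pin down the shape of $T'$. Let $w$ be the lowest common ancestor in the DST of the leftmost leaf $[a,a+1)$ and the rightmost leaf $[b-1,b)$ of the queried range, and write $d=\mathrm{depth}(w)$ and $h=\log_2 U$ for the leaf depth. Above $w$ the range $[a,b)$ lies entirely within one child at every level, so the root-to-$w$ portion of $T'$ is a single path (a ``stem'') in which each node has exactly one child in $T'$; this contributes $d$ splits. At $w$ the range straddles the midpoint, so both children of $w$ enter $T'$, and below $w$ the subtree $T'$ separates into exactly two descending branches: a left branch tracking the lower boundary $a$ and a right branch tracking the upper boundary $b$. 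The cover nodes themselves hang off these two branches, at most one per level on each side, but crucially they are leaves of $T'$ and demand no splits of their own.

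The counting step then finishes the argument. Each of the two boundary branches runs from depth $d+1$ down to the leaf level $h$, so each contributes at most $h-d-1$ splits, while $w$ itself contributes a single split. Adding the stem yields a total of at most $d + 1 + 2(h-d-1) = 2h - d - 1 \le 2\log_2 U - 1 < 2\log_2 U$, which is the claimed bound (and in fact slightly stronger).

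I expect the main obstacle to be the careful bookkeeping of the second step: one must argue rigorously that below the LCA each level of the DST contributes at most two splits (one per boundary branch) and that the canonical cover nodes attach as leaves of $T'$ without forcing any further descent. This is precisely the point at which the naive $O(\log^2 U)$ estimate collapses to $O(\log U)$, and it is exactly what the telescoping of shared ancestors along the stem, together with the ``at most two nodes per level'' structure of a dyadic cover, accomplishes. Framing everything through the single subtree $T'$, rather than through $2\log_2 U$ independent root-to-node paths, is what makes this collapse transparent.
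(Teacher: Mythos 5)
Your proof is correct and is essentially the paper's own argument: the remark after the proof of \autoref{th:twowise} establishes the corollary by observing that at each level $0 \leq l < \log_2 U$ only the two boundary-tracking nodes $S^l_{(a_l*)}$ and $S^l_{(b_l*)}$ (possibly coinciding) ever need to be split, and your subtree $T'$ is exactly the union of those nodes, with your stem/LCA/branch decomposition merely making explicit where the two per-level nodes coincide and thereby yielding the marginally sharper count $2\log_2 U - d - 1$. The only bookkeeping to add is the degenerate case where $[a,b)$ is itself dyadic (or a single leaf), in which case $w$ does not straddle its own midpoint and is instead a leaf of $T'$; there the count is just $\mathrm{depth}(w) \leq \log_2 U$, so the bound holds a fortiori.
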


%A dyadic simulation procedure is fully specified once the 
%Next, we will describe the split procedure.   Before we do so, we would like to simplify the notation by 

We now introduce the
concept of \emph{prefix} that will simplify our presentation next.  Viewing the DST as a binary trie, we can index each tree node as a prefix.  For example, 
in~\autoref{fig:prefix}, 
%the range $[4, 8)$ is equivalent to the prefix $010*$ since it contains two numbers $4 = 0100_2$ and $5 = 0101_2$,
the range $[4, 8)$ is equivalent to the prefix $01*$ since it contains four binary numbers $4 = (0100)_2$, $5 = (0101)_2$, $6 = (0110)_2$, $7 = (0111)_2$ that 
share the common prefix 01.   
%The entire universe $[0, U)$ is written as $*$.  

%Our DST-based approach is designed to possess the following two properties.  

%In the sequel, when describing the split procedure, we assume that the number of underlying RVs that composes $S_{\alpha*}$, the range-sum to split, is $2n$, so each $S_{\alpha0*}$ or $S_{\alpha1*}$ is the range-sum of $n$ underlying RVs.

%We now describe, for any prefix $\alpha*$, how the binary split procedure split a binary  $S_{\alpha*)$ into $S_{\alpha0*} + S_{\alpha1*}$.
Next, we will prove that our DST-based approach satisfies the first requirement (underlying RVs being i.i.d.) above 
%(that the underlying RVs $X_0, X_1, \cdots, X_{U-1}$ 
%each generated this way are i.i.d. with distribution $X$), 
if the split procedure possesses two properties that we call (I) and (II).  Suppose
a dyadic range-sum $S_{\alpha*}$ that has distribution $X^{*2n}$ is split into $S_{\alpha0*} + S_{\alpha1*}$.  
\emph{Property (I)} is that 
%for
%any prefix $\alpha*$ whose corresponding range contains $2n$ singletons so that 
%after splitting a dyadic range sum $S_{\alpha*}$ that has distribution $X^{*2n}$ into 
%$S_{\alpha0*} + S_{\alpha1*}$, the two half-range sums 
$S_{\alpha0*}$ and $S_{\alpha1*}$
are i.i.d. with distribution $X^{*n}$. 
\emph{Property (II)} is that the random vector $\langle S_{\alpha0*}, S_{\alpha1*}\rangle$ is a (vector) function of only $S_{\alpha*}$
as far as independence analysis is concerned.

Now, we describe the binary split procedure.  
To split any $S_{\alpha*}$, we simply generate an RV $L_{\alpha*}$ using a conditional distribution that we will specify next, and then let 
$S_{\alpha0*} \triangleq L_{\alpha*}$ and $S_{\alpha1*} \triangleq S_{\alpha*} - L_{\alpha*}$.
%  $L_{\alpha*}$ and $S_{\alpha*} - L_{\alpha*}$, respectively. 
%As in the description of property (I), we assume $S_{\alpha*}$ is the range-sum of $2n$ singletons, and each child $S_{\alpha0*}$ or $S_{\alpha1*}$ is the range-sum of $n$ singletons.
Since the split procedure is the same for any $\alpha*$, 
%The following theorem holds for any subscript $\alpha*$, soFor simplicity, _{S_{\alpha0*}|S_{\alpha*}}with value $z$.  from~\cite{dyahist}
we drop the subscript $\alpha*$ from $S_{\alpha*}$ and $L_{\alpha*}$ in describing it whenever possible.
%When designing the splits, supposing each children is the range-sum of $n$ random variables andregard it as
In the following derivations and proofs, we assume that $S$ is a continuous RV, so its probability density function (pdf) is used;  
if $S$ is instead a discrete RV, we can use its probability mass function (pmf) instead.
To split $S$ {\it for the first time}, {\it a fresh RV} $L$ is generated according to the following conditional pdf:
%(i.e., with fresh randomness) 
%Given the value of the parent $S=z$, we define the distribution of $L$ (and hence $S-L$) by the following conditional pdf$n=U/2^{l+1}$ if $\alpha*$ is an $l$-bit prefix, and
\begin{equation}\label{splitpost}
 f_{L|S}(L = x | S = z) \triangleq \rho_n(x)\rho_n(z-x) / \rho_{2n}(z),
\end{equation} 
where  $\rho_n(\cdot)$ and $\rho_{2n}(\cdot)$ are the pdfs of $X^{*n}$ and $X^{*2n}$ respectively. 
For notational simplicity, we drop the subscript $L|S$ from $f_{L|S}$ in the sequel. 
The following theorem states that this split procedure satisfies the aforementioned property (I).
% which states that it 
%splits $S$ into the sum of two i.i.d. random variables $L$ and $S - L$.  

%{\bf JFM: this theorem shows the formula \autoref{splitpost} implements property (I), so we should use property (I) instead of this theorem in the proof below.}
\begin{theorem}\label{th:split}
If $S$ has distribution $X^{*2n}$, then the conditional distribution of $L|S$ in \autoref{splitpost} implies that $L$ and $S-L$ are i.i.d. RVs having distribution $X^{*n}$.
\end{theorem}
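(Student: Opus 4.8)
The plan is to compute the joint density of the pair $\langle L, S-L\rangle$ directly and exhibit it as a product of two copies of $\rho_n$, which simultaneously yields independence and the two marginal distributions. First I would form the joint density of $\langle L, S\rangle$ by multiplying the conditional density in \autoref{splitpost} by the marginal density $\rho_{2n}$ of $S$,
\[
 f_{L,S}(x,z) = f(x\mid z)\,\rho_{2n}(z) = \rho_n(x)\,\rho_n(z-x),
\]
where the factor $\rho_{2n}(z)$ cancels cleanly. This cancellation is precisely what the conditional in \autoref{splitpost} was engineered to produce, and it is the heart of the argument.

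Next I would apply the change of variables $\langle x,z\rangle \mapsto \langle u,v\rangle$ given by $u = x$ and $v = z - x$, so that $u$ is a realization of $L$ and $v$ is a realization of $S-L$. This transformation is linear with unit Jacobian determinant, so the joint density of $\langle L, S-L\rangle$ is
\[
 f_{L,\,S-L}(u,v) = \rho_n(u)\,\rho_n(v).
\]
Since this factors as a product in which each factor is $\rho_n$, the random variables $L$ and $S-L$ are independent, and each has marginal density $\rho_n$, i.e. distribution $X^{*n}$. This is exactly Property (I). I would also insert a short sanity check that \autoref{splitpost} is a legitimate conditional density: integrating $f(x\mid z)$ over $x$ gives $(\rho_n * \rho_n)(z)/\rho_{2n}(z)$, which equals $1$ because the self-convolution $\rho_n * \rho_n$ of the density of $X^{*n}$ is by definition $\rho_{2n}$, the density of $X^{*2n}$. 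The same convolution identity is what silently drives the cancellation in the first step.

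I do not anticipate a genuine obstacle here, as the statement reduces to a one-line change-of-variables computation once the conditional density is written down. The only points requiring mild care are (a) restricting attention to those $z$ with $\rho_{2n}(z) > 0$, which is a set of full probability under $S$ and on which the conditional is well defined, and (b) the discrete case, where the integrals become sums, the density factorization becomes a pmf factorization, and the change of variables $\langle x,z\rangle \mapsto \langle x,\,z-x\rangle$ is simply a bijective reindexing of the lattice, so the identical factorization holds verbatim.
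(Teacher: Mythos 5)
Your proposal is correct and follows essentially the same route as the paper: both compute the joint density of $\langle L, S-L\rangle$ by combining the conditional in \autoref{splitpost} with the marginal $\rho_{2n}$ of $S$ (the paper substitutes $z=x+v$ directly, you phrase it as a unit-Jacobian change of variables), obtain the factorization $\rho_n(x)\rho_n(v)$, and read off independence and the $X^{*n}$ marginals from it. Your added checks (normalization of the conditional via $\rho_n * \rho_n = \rho_{2n}$, the support caveat, and the discrete case) are sound refinements of the same argument, not a different one.
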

%In~\cite{dyahist}, \autoref{splitpost} was given for Cauchy distribution, but it actually works for any continuous or discrete (regarding $\phi(\cdot)$ and $\rho(\cdot)$ as pmf's) distribution under ERS. 
%This specific conditional distribution of $S_{\alpha0*} | S_{\alpha*}$ ensures the two children $S_{\alpha0*}$ and $S_{\alpha1*}$ follow
%follow i.i.d. target distribution $X^{*n}$ as long as the parent $S_{\alpha*}$ follows $X^{*2n}$. Formally, this is proved by calculating their joint pdf using the relationship $S_{\alpha1*}=S_{\alpha*} - S_{\alpha0*}
\begin{proof}
	We first calculate the joint pdf of $L$ and $S-L$ as follows
\begin{align}%\label{line:xtoz}
	f(L=x, S-L=v) 
	%&= f(L=x, S=x+y) \\	\nonumber
	&= f(L=x | S=x+v)f(S=x+v)  %\label{line:condidef}
	=  \rho_n(x)\rho_n(v) \label{line:rhoxrhoy}, 
\end{align}
where 
%\autoref{line:xtoz} changes the independent variables of the probability density from $L$ and $S-L$ to $L$ and $S$, 
%\autoref{line:condidef} is by the definition of conditional pdf, and 
\autoref{line:rhoxrhoy} can be derived from \autoref{splitpost} by replacing $z$ with $x+v$.

% in $f(L=x | S=x+y)$ and $\rho_{2n}(x+y)$ in  $f(S=x+y)$. 

%The marginal pdf of $L$ is calculated by integrating the joint pdf over $y$ as
Hence we have 
$
	f(L=x) = \int_{-\infty}^{\infty} \rho_n(x)\rho_n(v) \mathrm{d}v =  \rho_n(x).
$
%\rho_n(x)\cdot\int_{-\infty}^{\infty}\rho_n(y) \mathrm{d}y =
%{\bf where the integration is on the support of $S-L$}. The integration $\int_{-\infty}^{\infty}\rho_n(y) \mathrm{d}y$ is always $1$, because $\rho_n(\cdot)$ is a pdf.  that $L$ and $S-L$ follow i.i.d. distribution $X^{*n}$
Similarly, $f(S-L=v) = \rho_n(v)$.  Hence we have $f(L=x, S-L=v) = f(L=x) f(S-L=v)$, which proves the independence.
\end{proof}

%When $X$ is a discrete distribution, after replacing $\rho_n(\cdot)$ and all $f(\cdot)$'s to probability mass functions (pmf's), \autoref{splitpost} and \autoref{th:split} still holds for a similar reason as above.
%A similar result holds if $X$ is a discrete distribution.
% that for each prefix $\alpha*$, $L_{\alpha*}$ is freshly generated so that 
%it is independent of $L_{\beta*}$ and $S_{\beta*}$, for any $\beta \not= \alpha$. 
%which is quite vague so far, is that  we can still prove 

%$L_{\alpha0*}$ (the left part) 
%and R_{\alpha1*}$ (the right part), such that ;
%(2) each $L_{\alpha0*}$

We now put the index subscript $\alpha*$ back into $S$ and $L$, since we need to state results concerning a set of $S$ and $L$ terms with different indices.  
We pause to clarify the mathematical meanings of two emphasized phrases used in stating the split procedure.
The first phrase is ``for the first time''. 
%As stated earlier, the actual split of any $S_{\alpha*}$ can happen at most once.  
It means that, 
%if a certain $S_{\alpha*}$ is to be split for the first time, then $L_{\alpha*}$ 
%is generated (only once) using \autoref{splitpost}, and $S_{\alpha*}$ is split into $S_{\alpha0*} = L_{\alpha*}$ and $S_{\alpha1*} = S_{\alpha*} - L_{\alpha*}$.
%After that, 
in case $S_{\alpha*}$ is to be split again, the same $L_{\alpha*}$, that was generated and used for the first time, must be used again.  
This is a basic requirement for generating RVs, because the values of RVs should be fixed upon generation, which is not even a part of the problem statement. 
%explain  that  can be  , and 
The second phrase is ``a fresh RV''.  It means that each $L_{\alpha*}$ is generated based on only the value of $S_{\alpha*}$ using fresh 
randomness.  
%It implies that $L_{\alpha*}$ is a function only of $S_{\alpha*}$ as far as independence analysis is concerned. 
As a result, the random vector $\langle S_{\alpha0*}, S_{\alpha1*}\rangle$ is a vector function of only 
$S_{\alpha*}$ as far as independence analysis is concerned, which 
is precisely 
property (II).  The language of property (II), such as ``fresh randomness'', is a bit vague right now.
It will be further simplified and clarified in~\autoref{sec:erssolutions}.

The aforementioned idealized assumption is simply 
%essentially the meanings of the two phrases combined, with the meaning of the first phrase being translated into
that we can 
somehow remember the fresh randomness involved in generating each $L_{\alpha*}$ (for the first time), so that property (II) can be ensured.
% which we use to realize the mathematical meanings of these two phrases.  
%For this purpose, we assume that we can somehow remember the value of $L_{\alpha*}$ after it is first generated.
%This is the aforementioned idealized assumption that we make throughout out this section.
However, since the number of non-leaf prefixes in each DST is $O(U)$, it is typically prohibitively expensive in terms of storage cost to remember 
such fresh randomness for every  
$L_{\alpha*}$ generated, and this idealized assumption is impractical.   Since property (II) depends on this assumption, it is also impractical.  
In~\autoref{sec:dstimpl}, we will introduce a slightly weakened property (II*) that does not require this assumption, yet can still lead to strong provable
statistical guarantees.

Before we state and prove the following theorem, we introduce a third notation $S^l_{(i*)}$ for a dyadic range-sum (besides $S[a, b)$ and $S_{\alpha*}$).
%This notation is $S^l_{(i*)}$, where $l$ and $i$ are nonnegative integers and $i < 2^l$.
$S^l_{(i*)}$ represents the same dyadic range-sum as $S_{\alpha*}$,
if the number $i$, written as an $l$-bit binary number, is (the binary prefix) $\alpha$.  
For example, in the example shown in \autoref{fig:prefix}, $S^2_{(1*)}$ is equivalent to 
$S_{01*}$ and $S[4, 8)$.  We define $L^l_{(i*)}$ similarly (as the $L$ involved in splitting $S^l_{(i*)}$).
Note that if $S_{\alpha*}$ is the same as $S^l_{(i*)}$,
then $S_{\alpha0*}$ and $S_{\alpha1*}$, the two children of $S_{\alpha*}$, are the same as $S^{l+1}_{((2i)*)}$ and $S^{l+1}_{((2i+1)*)}$ respectively. 

Since the DST is a complete binary tree, there are $2^l$ nodes at the $l^{th}$ level down the root.  
Under this $S^l_{(i*)}$ notation, these $2^l$ nodes are $S^l_{(0*)}$, $S^l_{(1*)}$, $\cdots$, $S^l_{(\lambda_l*)}$, where $\lambda_l = 2^l-1$ (defined 
for any $l$).
The following theorem states that for any $1 \le l \le \log_2 U$, these $2^l$ dyadic range-sums are i.i.d. RVs.  
%One can think of $S^l_{(i*)}$ as the $i^{th}$ nosede in the dyadic simulati)on tree  nodes (

\begin{theorem}\label{th:dyasim}
Suppose that the split procedure satisfies  properties (I) and (II).  
Then, for any $l$, $1\le l \le  \log_2 U$, the $2^l$ dyadic range-sums $S^l_{(0*)}$, $S^l_{(1*)}$, $\cdots$, $S^l_{(\lambda_l*)}$ at level $l$ 
%	the dyadic range sum values on the $l^{th}$ level of the DST 
%	defined by recursively applying the split procedure above, 
have i.i.d.  distribution $X^{*(U/2^l)}$.
%	if at any non-leaf vertex $\alpha$ that is the sum of $2n$ singletons, %supposing \autoref{algo:dyasim-recur} $S_{\alpha}$ follows distribution $X^{*(2n)}$, where $\lambda_l = 2^l-1$ as defined above, 
%	the operation that splits $S_{\alpha}$ to children range-sums $S_{\alpha0}$ and $S_{\alpha1}$ satisfies the following four requirements,

%	\begin{enumerate}
	%	\item The marginal distribution of both $S_{\alpha0}$ and $S_{\alpha1}$ are $X^{*n}$.
	%	\item $S_{\alpha0}$ and $S_{\alpha1}$ are independent when $S_\alpha$ is unknown.
	%	\item $S_{\alpha0} + S_{\alpha1} = S_{\alpha}$  (invariant).
	%	\item $S_{\alpha0}$ and $S_{\alpha1}$ are conditionally independent to the other parts of the simulation tree except the subtree of $\alpha$ given $S_{\alpha}$.
%	\end{enumerate}
\end{theorem}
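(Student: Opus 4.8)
The plan is to prove the statement by induction on the level $l$, using \autoref{th:split} (property (I)) to take care of the marginal distributions and the independence within each sibling pair, and using property (II) to propagate independence across the distinct subtrees.

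For the base case $l=1$, the root $S_*=S[0,U)$ has distribution $X^{*U}$ by initialization, so applying \autoref{th:split} to the single split at the root immediately gives that $S^1_{(0*)}$ and $S^1_{(1*)}$ are i.i.d.\ with distribution $X^{*(U/2)}$, which is exactly the claim at level $1$.

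For the inductive step I would assume that the $2^l$ level-$l$ range-sums $S^l_{(0*)},\dots,S^l_{(\lambda_l*)}$ are i.i.d.\ with distribution $X^{*(U/2^l)}$, and write $n\triangleq U/2^{l+1}$ so that $U/2^l=2n$. Each parent $S^l_{(i*)}$ splits into the pair $\langle S^{l+1}_{((2i)*)}, S^{l+1}_{((2i+1)*)}\rangle$. First I would invoke \autoref{th:split} separately for each $i$: since $S^l_{(i*)}$ has distribution $X^{*2n}$, its two children are i.i.d.\ with distribution $X^{*n}=X^{*(U/2^{l+1})}$, which settles both the asserted marginal distribution and the independence within each sibling pair. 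Then I would argue mutual independence across the $2^l$ pairs, where property (II) is the workhorse: each pair $\langle S^{l+1}_{((2i)*)}, S^{l+1}_{((2i+1)*)}\rangle$ is a vector function of $S^l_{(i*)}$ alone together with the fresh randomness $L^l_{(i*)}$ used only in that one split. Because the inductive hypothesis makes the parents mutually independent, and the fresh randomness feeding the distinct splits is independent across $i$ and independent of all parents, the $2^l$ child pairs are vector functions of mutually independent inputs and are therefore themselves mutually independent random vectors.

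Finally I would combine the two independence facts via the standard observation that if $\langle A_0,B_0\rangle,\dots,\langle A_{m-1},B_{m-1}\rangle$ are mutually independent random vectors and, within each vector, $A_i$ and $B_i$ are independent, then the entire collection $A_0,B_0,\dots,A_{m-1},B_{m-1}$ is mutually independent; applying this with $m=2^l$ yields that all $2^{l+1}$ level-$(l+1)$ range-sums are mutually independent, each with distribution $X^{*(U/2^{l+1})}$, completing the induction. The main obstacle is the across-pairs step: making precise that the ``fresh randomness'' in distinct splits is jointly independent and independent of the parents, so that property (II) genuinely factorizes the joint law into per-subtree pieces, rather than merely asserting it; this is exactly the point at which the idealized assumption behind property (II) is invoked, and where care is needed to justify the factorization.
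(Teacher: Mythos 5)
Your proposal is correct and follows essentially the same route as the paper's proof: the same induction on $l$, with property (I) (via \autoref{th:split}) yielding the marginals and within-pair independence (the paper's fact (*)), property (II) plus the induction hypothesis yielding independence across the sibling pairs (the paper's fact (**)), and a final combination step that the paper carries out explicitly as a factorization of the joint cdf. Your closing remark about where the idealized assumption enters matches the paper's own caveat about property (II), so there is nothing to add.
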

\begin{proof}
	We prove by induction on $l$. %that the generated range-sums $S^l_{(0*)}$, $S^l_{(1*)}$, $\cdots$, $S^l_{((2^l-1)*)}$ on the $l^{th}$ level of the simulation tree are i.i.d. and follows $X^{*n}$ where $n = U / 2^l$ is the number of singletons in each range-sum, for each $l=0, 1,2,\cdots$. \autoref{th:dyasim} holds by setting $l= \log_2 U$. 
	For the base case when $l=1$, there are two dyadic range-sums at the $1^{st}$ level: $S^1_{(0*)}$ and $S^1_{(0*)}$.
Since they result from splitting $S_*$, which has distribution $X^{*U}$ (by initialization), $S^1_{(0*)}$ and $S^1_{(0*)}$ are i.i.d. RVs  with distribution $X^{*(U/2)}$ according to property (I).

%the only random variable on the $0^{th}$ level,	\autoref{th:split}the split procedure above
	
	Now, we prove the case of $l+1$ from that of $l$. By the induction assumption, for any $i$, the parent $S^l_{(i*)}$  follows $X^{*(U/2^l)}$, so by property (I), its two children $S^{l+1}_{((2i)*)}$ and $S^{l+1}_{((2i+1)*)}$ are independent and each has the marginal distribution $X^{*(U/2^{l+1})}$.
	We denote this as \emph{fact (*)}.
%	Since  are i.i.d., and their sum $S^{l+1}_{((2i)*)} + S^{l+1}_{((2i+1)*)} = S^l_{(i*)}$ by induction assumption, both random variables after the split follow the marginal distribution of $X^{*(n/2)}$. The reason is as follows: suppose the characteristic function of $X$ is $\chi(\tau)$, then the characteristic function of $X^{*n}$ is $\chi^n(\tau)$. Suppose the characteristic function of $S^{l+1}_{((2i)*)}$ and $S^{l+1}_{((2i+1)*)}$ is $v(\tau)$ (the same because of i.i.d.), then $v^2(\tau) = \chi^n(\tau)$, so $v(\tau)  = \chi^{n/2}(\tau)$, which implies that $S^{l+1}_{((2i)*)}$ and $S^{l+1}_{((2i+1)*)}$ follow $X^{*(n/2)}$. 
	It remains to show $S^{l+1}_{(0*)}$, $S^{l+1}_{(1*)}$, $\cdots$, $S^{l+1}_{(\lambda_{l+1}*)}$, the generated range-sums on level $l+1$, are  independent. By induction assumption, $S^l_{(0*)}$, $S^l_{(1*)}$, $\cdots$, $S^l_{(\lambda_l*)}$ are  independent.
Each $\langle S^{l+1}_{((2i)*)}, S^{l+1}_{((2i+1)*)}\rangle$ is a (vector) function of only $S^l_{(i*)}$, which we call property (II) earlier. 
	Hence the random vectors $\langle S^{l+1}_{((2i)*)}, S^{l+1}_{((2i+1)*)}\rangle$ are  independent  for different $i$, which we denote as \emph{fact (**)}.

	Therefore, the independence of all values on level $l+1$ follows from the following factorization of the joint cdf for any sequence of values $x_0$, $x_1$, $\cdots$, $x_{\lambda_{l+1}}\in\mathbb{R}$, \newline
	$\Pr\left(S^{l+1}_{(0*)} \leq x_0,   S^{l+1}_{(1*)} \leq x_1,\cdots, S^{l+1}_{(\lambda_{l+1}*)} \leq x_{\lambda_{l+1}} \right)  \newline=  \prod_{i=0}^{\lambda_l}\Pr\left(S^{l+1}_{((2i)*)}\leq x_{2i}, S^{l+1}_{((2i+1)*)} \leq x_{2i+1}\right) \newline=  \prod_{i=0}^{\lambda_l}\Pr\left(S^{l+1}_{((2i)*)} \leq x_{2i}\right)
	\Pr\left( S^{l+1}_{((2i+1)*)} \leq x_{2i+1}\right) = \prod_{i=0}^{\lambda_{l+1}} \Pr\left(S^{l+1}_{(i*)} \leq x_{i}\right)$, 
	\begin{comment}
\begin{equation}\begin{split}
	&\Pr\left(S^{l+1}_{(0*)}\leq x_0, S^{l+1}_{(1*)} \leq x_1,\right.  \left. \cdots, S^{l+1}_{(\lambda_{l+1}*)} \leq x_{\lambda_{l+1}} \right)\\
	= &\prod_{i=0}^{\lambda_l}\Pr\left(S^{l+1}_{((2i)*)}\leq x_{2i}, S^{l+1}_{((2i+1)*)} \leq x_{2i+1}\right) \\
	= &\prod_{i=0}^{\lambda_l}\Pr\left(S^{l+1}_{((2i)*)} \leq x_{2i}\right)
	\Pr\left( S^{l+1}_{((2i+1)*)} \leq x_{2i+1}\right) \\
	= &\prod_{i=0}^{\lambda_{l+1}} \Pr\left(S^{l+1}_{(i*)} \leq x_{i}\right) ,
\end{split}\end{equation}
\end{comment}
where the first equation is due to fact (**) above 
and the second is due to fact (*) above.
\end{proof}

%which will be useful in proving \autoref{th:twowise}, 
\begin{corollary}\label{cor:xiindep}
	The underlying RVs $X_0, X_1, \cdots, X_{U-1}$, which are $S^l_{(0*)}$, $S^l_{(1*)}$, $\cdots$, $S^l_{(\lambda_l*)}$ for $l=\log_2 U$, have i.i.d. distribution $X$.
\end{corollary}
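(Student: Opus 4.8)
The plan is to obtain this corollary as the bottom-level special case of \autoref{th:dyasim}, namely the instance $l = \log_2 U$. First I would confirm that $l = \log_2 U$ lies in the admissible range $1 \le l \le \log_2 U$ of the theorem (it is the right endpoint), so the theorem applies verbatim. At this deepest level the DST, being a complete binary tree on a universe of size $U = 2^{\log_2 U}$, has exactly $2^l = 2^{\log_2 U} = U$ nodes, and these are precisely the leaves. I would then identify each leaf node $S^{\log_2 U}_{(i*)}$ with the underlying RV $X_i$: the dyadic range it represents has width $U/2^l = U/U = 1$, so it is the singleton range-sum $S[i, i+1)$, which by the second ERS requirement equals $X_i$.

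With the identification in place, the conclusion is immediate from the theorem's distributional statement. \autoref{th:dyasim} asserts that the $2^l$ level-$l$ range-sums are i.i.d.\ with distribution $X^{*(U/2^l)}$; substituting $l = \log_2 U$ gives distribution $X^{*(U/2^{\log_2 U})} = X^{*1}$, and since the first convolutional power of a distribution is the distribution itself, $X^{*1} = X$. Hence $X_0, X_1, \ldots, X_{U-1}$ are i.i.d.\ with distribution $X$, which is exactly the assertion of the corollary and the first ERS requirement.

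I anticipate essentially no genuine obstacle here, since the corollary is a direct specialization of the already-established theorem; the only points requiring a sentence of care are the two bookkeeping observations above, namely that the level-$\log_2 U$ nodes are exactly the singleton leaves $X_i$ and that the convolution exponent collapses to $1$. Both are purely notational consequences of the completeness of the DST and the definition of the $S^l_{(i*)}$ indexing, so the entire argument is one short invocation of \autoref{th:dyasim}.
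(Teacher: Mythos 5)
Your proposal is correct and matches the paper's intent exactly: the paper offers no separate proof for this corollary precisely because it is the direct specialization of \autoref{th:dyasim} at $l = \log_2 U$, with the leaf nodes being the singleton range-sums $S[i, i+1) = X_i$ and the convolution power collapsing to $X^{*1} = X$. Your two bookkeeping observations are the same notational steps the paper leaves implicit.
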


\begin{remark*}The following observation, which is a part of fact (*) in the proof above of~\autoref{th:dyasim}, continues to hold when property (II) is taken away, since the proof of this part only needs property (I).\end{remark*}

\begin{observation}\label{fact:marginal}
	Even if the split procedure satisfies only property (I), each $S^l_{(i*)}$ still has marginal distribution $X^{*(U/2^l)}$.
\end{observation}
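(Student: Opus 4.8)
The plan is to prove this by induction on the level $l$, extracting from property (I) only its claim about the marginal distributions of the two children and discarding the independence claims, which are exactly the part of the argument in \autoref{th:dyasim} that invoked property (II). The key conceptual point, which I would state up front, is that property (I) as established by \autoref{th:split} asserts \emph{unconditionally} that whenever a range-sum $S$ with distribution $X^{*2n}$ is split via the conditional law in \autoref{splitpost}, each of $L$ and $S-L$ has marginal distribution $X^{*n}$; this assertion refers to the parent alone and makes no reference to any sibling subtree, so it does not depend on property (II) in any way.

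For the base case I would take the root $S_*$, which has distribution $X^{*U}$ by initialization, and apply property (I) once to conclude that the two level-$1$ nodes $S^1_{(0*)}$ and $S^1_{(1*)}$ each have marginal distribution $X^{*(U/2)}$. For the inductive step, suppose every node $S^l_{(i*)}$ at level $l$ has marginal distribution $X^{*(U/2^l)}$. Fix any node $S^{l+1}_{(j*)}$ at level $l+1$; it is one of the two children of its parent $S^l_{(i*)}$, where $i=\lfloor j/2\rfloor$. Setting $2n = U/2^l$, so that $n = U/2^{l+1}$, property (I) applied to the split of $S^l_{(i*)}$ yields that this child has marginal distribution $X^{*n} = X^{*(U/2^{l+1})}$. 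Since $j$ was arbitrary, every level-$(l+1)$ node has the claimed marginal, closing the induction.

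There is essentially no computational obstacle here; the only thing worth checking carefully is the dependency structure of the argument. Specifically, I would confirm that the marginal distribution of a single node $S^l_{(i*)}$ is determined entirely by the chain of splits along its root-to-node path, and hence by property (I) alone, whereas property (II) entered \autoref{th:dyasim} only through fact (**) in order to factorize the joint distribution across \emph{distinct} nodes at the same level. Since the present statement concerns a single marginal and never a joint law, no cross-node independence---and therefore no appeal to property (II)---is required, which is precisely the content of the preceding remark.
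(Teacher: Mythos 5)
Your proof is correct and follows essentially the same route as the paper: the paper's justification is the remark that the observation is just the marginal part of fact~(*) in the inductive proof of \autoref{th:dyasim}, which uses only property~(I), and your argument simply makes that level-by-level induction explicit. Your closing observation---that property~(II) entered \autoref{th:dyasim} only through fact~(**) to factorize joint laws across distinct nodes, and is therefore irrelevant to a single node's marginal---is exactly the point of the paper's remark.
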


The logic of the induction step in the proof of~\autoref{th:dyasim} can be stated as the following lemma, which will be used in the proofs 
in~\autoref{sec:dstimpl}.
%to show that the children at level $l+1$ are still independent given independent parents at level $l$, after 
%where property (II) will be weakened. 

%Under such circumstances, property (I) and fact (*) continue to hold, so \autoref{lem:induction} still holds and hence will be used twice.

\begin{lemma}\label{lem:induction} If a set of $k > 1$ distinct dyadic range-sums $S^l_{(i_1*)}$, $S^l_{(i_2*)}, \cdots$, $S^l_{(i_k*)}$ at level $l$ are independent and they 
are \emph{split conditionally independently}, then their $2k$ children $S^{l+1}_{((2i_1)*)}$, $S^{l+1}_{((2i_1+1)*)}$, $S^{l+1}_{((2i_2)*)}$, $S^{l+1}_{((2i_2+1)*)}$, $\cdots$, $S^{l+1}_{((2i_k)*)}$, $S^{l+1}_{((2i_k+1)*)}$ at level
$l+1$ are also independent.  
\end{lemma}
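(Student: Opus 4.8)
The plan is to re-run the induction step of \autoref{th:dyasim} verbatim, but on the sub-collection indexed by $\{i_1,\dots,i_k\}$ rather than on the full level. Write $P_j \triangleq S^l_{(i_j*)}$ for the $j$-th parent and $\langle C_j^0, C_j^1\rangle \triangleq \langle S^{l+1}_{((2i_j)*)}, S^{l+1}_{((2i_j+1)*)}\rangle$ for its ordered pair of children; the goal is that the $2k$ variables $C_1^0,C_1^1,\dots,C_k^0,C_k^1$ are mutually independent. I would split the argument into the two facts used in \autoref{th:dyasim}: a \emph{within-pair} fact that $C_j^0$ and $C_j^1$ are independent of each other, and an \emph{across-parents} fact that the $k$ pairs $\langle C_j^0, C_j^1\rangle$ are mutually independent.

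The within-pair fact is immediate from \autoref{th:split}: each $P_j$, being a level-$l$ dyadic range-sum, has distribution $X^{*(U/2^l)}$ (as established in the proof of \autoref{th:dyasim} from property (I) alone), so splitting $P_j$ via the conditional law \autoref{splitpost} yields children that are i.i.d. with distribution $X^{*(U/2^{l+1})}$. The across-parents fact is where the hypothesis ``split conditionally independently'' does the work, and this is the step I expect to be the main obstacle, since that phrase must first be made precise. I would read it as follows: each pair is produced by a common measurable split map $g$ applied to its parent together with a fresh seed, $\langle C_j^0, C_j^1\rangle = g(P_j, R_j)$, where the seeds $R_1,\dots,R_k$ are mutually independent and jointly independent of $(P_1,\dots,P_k)$. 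Since the parents $P_1,\dots,P_k$ are independent by hypothesis, the blocks $(P_1,R_1),\dots,(P_k,R_k)$ are then drawn from disjoint, mutually independent sources of randomness, so the pairs $\langle C_j^0,C_j^1\rangle$, each a function of its own block, are mutually independent.

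With both facts in hand, I would conclude exactly as in \autoref{th:dyasim} by factorizing the joint cdf: for any reals $x_1^0,x_1^1,\dots,x_k^0,x_k^1$,
\[
\Pr\!\Bigl(\bigcap_{j=1}^{k}\{C_j^0\le x_j^0,\,C_j^1\le x_j^1\}\Bigr)
= \prod_{j=1}^{k}\Pr\bigl(C_j^0\le x_j^0,\,C_j^1\le x_j^1\bigr)
= \prod_{j=1}^{k}\Pr\bigl(C_j^0\le x_j^0\bigr)\Pr\bigl(C_j^1\le x_j^1\bigr),
\]
where the first equality is the across-parents fact and the second is the within-pair fact; this is exactly the mutual independence of all $2k$ children. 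The only genuinely delicate point is the across-parents step: once ``split conditionally independently'' is pinned down as the joint independence of the fresh seeds from one another and from the parents, everything else is a routine transcription of the \autoref{th:dyasim} induction onto a sub-collection of nodes.
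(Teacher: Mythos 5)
Your proof is correct and takes essentially the same route as the paper: the paper's justification of this lemma is exactly the induction step of \autoref{th:dyasim}, split into its fact (*) (within-pair independence via property (I) and \autoref{th:split}) and fact (**) (independence across parents), followed by the same joint-cdf factorization. Your formalization of ``split conditionally independently'' as fresh seeds that are mutually independent and jointly independent of the parents matches the paper's own remark after the lemma, so the ``delicate point'' you flagged is resolved just as the paper resolves it.
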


\begin{remark*}
Here, ``split conditionally independently'' means the following two conditions that together lead to fact (**). First, the RVs involved in these splits, namely $L^l_{(i_1*)}$, $L^l_{(i_2*)}, \cdots$, $L^l_{(i_k*)}$ are (conditionally) independent provided that 
$S^l_{(i_1*)}$, $S^l_{(i_2*)}, \cdots$, $S^l_{(i_k*)}$ are independent.  Second, each such $L^l_{(i*)}$ involved is a (randomized) function of $S^l_{(i*)}$ only.
\end{remark*}

\subsection{Efficient Range-Summable (ERS) Solutions} \label{sec:erssolutions}

%In this section, we describe the three aforementioned DST-based ERS solutions for Gaussian, Cauchy, and random walk respectively.  

%and drop its subscript $\alpha*$ again in the sequel.a single dyadic range sum , which 

As explained earlier, every DST-based solution boils down to generating $L_{\alpha*}$  according to the conditional distribution $f( L_{\alpha*} | S_{\alpha*} )$ specified in \autoref{splitpost}.  
%Again, we assume
%that $S_{\alpha*}$ has distribution $X^{*(2n)}$.
%is the range-sum of $2n$ underlying RVs. 
Although \autoref{splitpost} applies to any distribution $X$ in principle, for such a solution to work, two 
hurdles have to be overcome.  
The first hurdle is a mathematical one:  Nice closed-form formulae for $\rho_n(x)$ (pdf of $X^{*n}$) and $\rho_{2n}(x)$ (pdf of $X^{*2n}$), and hence for $f( L_{\alpha*} | S_{\alpha*} )$, 
appear to exist for only a few such ($X$)'s.  For other target distributions, designing DST-based ERS solutions appears to be challenging.  

\begin{comment}
For example, we would love to design a DST-based ERS solution in which $X$ is a $p$-stable distribution, for an arbitrary $p \in (0 , 2]$.
We are not able to do that so far since there is no closed-form pdf for $\rho_n(x)$ and $\rho_{2n}(x)$ when $X$ has a $p$-stable distribution except for the cases where
$p = 1$ (Cauchy), $p = 2$ (Gaussian), or $p = 0.5$ (L\'evy).   As a part of our future research, we are investigating how 
to modify the simulation procedure of Chambers, Mallows, and Stuck~\cite{chambers} (for generating $p$-stable distributions) to split a RV with distribution $X^{*(2n)}$
into two independent RVs each with distribution $X^{*n}$.
\end{comment}

Even when the distribution $X$ is nice so that we have a closed-form formula, we are still facing the second hurdle, which is to generate $L_{\alpha*}$ in a computationally efficient manner.  
%However, this is only a statistical specification, not a computational one.  
A computational procedure for generating $L_{\alpha*}$ is typically a two-step process as follows.  
First, we generate a {\it fresh} (i.e., independent of all other RVs including $S_{\alpha*}$) uniform random $\mu$-bit-long binary string $C_{\alpha*}$ that, if viewed as nonnegative integer, is uniformly distributed in the set $\{0, 1, 2, \cdots, 2^\mu-1\}$.   
Usually $\mu = 32$ provides enough statistical precision.
% and if even higher precision is needed, we can set $\mu = 64$. %{\bf Note: this claim is empirical but not theoretical, \cite{stabledist} proved an upper bound of $m$ in terms of $\epsilon$ in the application of data streaming \autoref{sec:datastreaming}.}
Second, $L_{\alpha*}$ is set to $\theta(C_{\alpha*}, S_{\alpha*})$, where $\theta(x, z)$ is a {\it deterministic} function designed in such a way 
that the resulting $L_{\alpha*}$ has the right conditional distribution as specified in \autoref{splitpost}.

Now we are ready to simplify the language of property (II) as promised earlier.  
The simplified property (II) is that each $C_{\alpha*}$ is a fresh RV (that is independent of any other RV). 
As a result, 
each $L_{\alpha*} = \theta(C_{\alpha*}, S_{\alpha*})$ is a {\it fresh} RV that is a function only of $S_{\alpha*}$, which is precisely property (II). 
With this simplified property (II), the idealized assumption becomes that each such $C_{\alpha*}$ (not $L_{\alpha*}$) needs to be remembered after it is first generated.  
%We will use the simplified property (II) in \autoref{sec:dstimpl}, when we explain how to generate $C_{\alpha*}$ in a cost-effective way.

%that when $x$ has a uniform distribution and $z$ has distribution $X^{*(2n)}$, $g(x, z)$ has distribution $L_{\alpha*} | S_{\alpha*}=z$ as shown in the RHS of \autoref{splitpost}.

In probability theory, the standard textbook technique, called inverse transform method, is to let $\theta(x, z) = F^{-1}(x | z)$ where $F(x | z) \triangleq \int_{-\infty}^x f(v| z)\mathrm{d}v$ is the conditional cdf of $ L_{\alpha*} | S_{\alpha*}$.
% that using the inverse transform as $g$, which is viewed as a parameterized (by the second argument $z$) function of only the first argument $x$, can do the trick. and as a result $F^{-1}(x | z)$ is quite expensive to compute
However, inverse transform is usually not computationally efficient, since the inverse conditional cdf $F^{-1}(x | z)$ usually does not have a closed form, as we will elaborate in \autoref{sec:cauchy}.  
%{\color{red} This is important since
%in our applications,} however, extremely high computational efficiency is often needed, since each query in an ANN application
%may require the generation of millions of such $(L_{\alpha*})'s$.   
We will show that, for all three ERS solutions, we propose alternative designs of $\theta(x, z)$ 
that are much more efficient, in terms of computational and/or space complexity, than the respective inverse transforms.
Finally, 
when $X$ is a discrete RV (e.g., when $X$ is a single-step random walk), it is possible to precompute $F^{-1}(x | z)$ for all possible values of $x$ and $z$, 
and store the values in a table.  This technique, called the tabular inverse transform~\cite{marsaglia}, can only be used when the memory cost of 
storing the table is manageable. 
 
%which is to carry out the aforementioned inverse transform using 
%a precomputed inverse cdf table (so that there is no need to compute the inverse online), on $f(x|z)$.

%each update in a data streaming application or 

\begin{comment} 
Therefore, the splitting operation can be viewed as a deterministic function that produces $(S_{\alpha0}, S_{\alpha1})$ from $S_{\alpha}$ and an additional random variable $C_{\alpha}$ (a random string of bits) that is independent of the rest of the simulation tree. We stress that \autoref{th:dyasim} is universal, and this deterministic function is the only difference in dyadic simulation for different target distributions. %( for any target distribution.) %, and the only different part is this deterministic function, which is specially designed to satisfy the first three requirements for different target distributions. 
%The requirement (4) above is equivalent to the statement that $S_{\alpha0}$ and $S_{\alpha1}$ are a deterministic function of %The two range-sums on children are computed deterministically from $S_{\alpha}$ and $C_{\alpha}$ by a special function that satisfy the first three requirements.
%We stress that \autoref{th:dyasim} is universal for all target distributions, so the only different part is this deterministic function, which is specially designed to satisfy the first three requirements for different target distributions. 
In particular, if the target distribution is ``nice'', this function will be simpler and faster. %In that function, $C_{\alpha}$ is first transformed to a random variable $W_{\alpha}$ (of a specific distribution, say Gaussian, Cauchy, \textit{etc.}), which later combines with $S_{\alpha}$ to produce $(S_{\alpha0}, S_{\alpha1})$. 
We will elaborate in later sections on the specific formula and tips for efficient implementation of this function for various target distributions (Gaussian, Cauchy, exponential, \textit{etc.}). 
\end{comment}

\subsection{Gaussian-DST}\label{sec:gauss}

For notational simplicity, we again drop the subscript $\alpha*$ from $S_{\alpha*}$, $L_{\alpha*}$, and $C_{\alpha*}$ in describing the binary split procedures in the sequel.
When $X$ is standard Gaussian $\mathcal{N}(0, 1)$, $X^{*n}$ is $\mathcal{N}(0, n)$ with pdf $\rho_n(x) = 1/\sqrt{2\pi n}\cdot \exp(-x^2/(2n))$, and $X^{*2n}$ is $\mathcal{N}(0, 2n)$ with pdf $\rho_{2n}(x) = 1/\sqrt{4\pi n}\cdot \exp(-x^2/(4n))$.  
According to \autoref{splitpost}, we have $f(L=x|S=z) = \rho_n(x)\rho_n(z-x) / \rho_{2n}(z) = 1/\sqrt{\pi n}\cdot \exp(-(x - z/2)^2/n)$, which can be written as the pdf of
$\mathcal{N}(z/2, n/2)$.   We generate $L$ according to the (value of) random string $C$ from the distribution $\mathcal{N}(z/2, n/2)$ as follows.
%, so the probability distribution of $L=x|S=z$ 
$L$ is set to $z/2 +Y$, where $Y$ 
is a fresh Gaussian RV with distribution $\mathcal{N}(0, n/2)$ generated from $C$ using efficient techniques such as 
Box-Muller transform~\cite{boxmuller}.  In~\cite{dyahist}, no specific technique was suggested for generating this $L$.

%Range-Summable Cauchy Distribution
\subsection{Cauchy-DST}\label{sec:cauchy}

Now we describe how to generate $L$ from $C$ when the target distribution $X$ is standard $\mathrm{Cauchy}(0, 1)$.
%{\color{red} Here $\mathrm{Cauchy}(x_0, n)$ denotes the Cauchy distribution with center $x_0$ and scale parameter $n$, whose pdf is $f(X=x)  = \left(\pi n\left[1+\left((x-x_0)/n\right)^2\right]\right)^{-1}$.}
By the stability property of Cauchy distribution, the $n^{th}$ convolution power %(sum of $n$ independent standard Cauchy random variables) with scale parameter $1$also a Cauchy distribution with scale parameter $n$
$X^{*n}$ is $\mathrm{Cauchy}(0, n)$, which %the (symmetric) Cauchy distribution with scale parameter $n$ 
has pdf
$
	\rho_n(x) = \left(\pi n\left[1+\left(x/n\right)^2\right]\right)^{-1}
$.
The pdf of $X^{*2n}$ is
$
	\rho_{2n}(x) = \left(2\pi n\left[1+\left(x/2n\right)^2\right]\right)^{-1}
$. 
 Therefore, %plugging \autoref{eq:cauchydist} into \autoref{splitpost}, we can get the pdf of the a posteriori distribution of $S_{\alpha0}|S_{\alpha}=z$ by of $L=x$ given $S=z$ is
 by \autoref{splitpost}, the conditional pdf 
%$
%		f(L=x|S=z) = \rho_n(x)\rho_n(z-x) / \rho_{2n}(z)  =n(z^2+4n^2) / \left(2\pi(n^2+
%			x^2)(n^2+(z-x)^2)\right)
%$.
\begin{equation}\label{eq:cauchysplit}
	\begin{split}
f(L=x|S=z) = \frac{\rho_n(x)\rho_n(z-x)}{\rho_{2n}(z)}= \frac{n}{2\pi}\cdot\frac{z^2+4n^2}{(n^2+
	x^2)(n^2+(z-x)^2)}.
\end{split}\end{equation}

%In the sequel, we drop the subscript $S|L$ of the conditional pdf $f(x|z)$ and cdf $F(x|z)$.
In~\cite{dyahist}, it was suggested that the inverse transform method described above be used to generate $L$.
%sample from the pdf above for the following reason. 
The rationale offered in~\cite{dyahist} was that since the conditional pdf $f(x|z)$ in \autoref{eq:cauchysplit} is a rational fraction, the conditional cdf $F(x|z)$ has a closed-form expression~\cite{stewart2020calculus}, which makes its inverse $F^{-1}(x|z)$ numerically calculable.
%In order to generate a random variable $X$ with cdf $F_X(x)$, the inverse transform method first generates $Y$ from uniform distribution on $[0,1]$ and then performs the inverse transform $X= F^{-1}(Y)$, where $F^{-1}_X(\cdot)$ is the inverse function of $F_X(x)$. 
%However, we note that inverse transform 
%Since the conditional pdf $f(x|z)$ in \autoref{eq:cauchysplit} is a rational fraction, the conditional cdf $F(x|z)$ has a closed-form expression~\cite{stewart2020calculus}. Therefore, the inverse transform $y=F^{-1}(x|z)$ can be numerically computed by solving the equation $F(y|z)=x$. 
However, 
%we find that
%although the equation above can be solved numerically, 
%which is not efficient for the following reasons. First, although , because , and nonlinear
%this expression is very complicated. As a result, %we doubt whether $F^{-1}(x|z)$ has a closed-form expression, and even if , 
the procedure for calculating $F^{-1}(x|z)$ has a high computational complexity in practice, since the (closed-form) formula of $F(x|z)$ is very complicated.
% in the inverse transform method, we need to solve the nonlinear equation $\int_{-\infty}^{S_{\alpha0}} \phi_{n,z}(x)\mathrm{d}x  = Y$ for $S_{\alpha0}$ from a randomly generated $W_{\alpha}\in [0, 1]$, which cannot be solved efficiently with numerical methods. \textbf{Is that true?}.precomputation is used
%Precomputing and storing $F^{-1}(x|z)$ for all possible $n$, $z$, and $x$ value combinations is infeasible either since the three-dimensional table would be
%very costly in terms of memory space, if a reasonable accuracy is to be achieved.

\begin{comment}
 the  memory space would be needed 
even if we are allowed to precompute the solutions of the above equation, the precomputed table that stores the solutions of $y=F^{-1}(x|z)$ %can be stored in 
 would be inefficient in space, because the $F(x|z)$ is dependent on two parameters: $z$ and $n$. In order to enumerate all solutions for different combinations of $n$, $z$ and $x$, the precomputed table will be three dimensional %to work for all possible combinations of  \textbf{
 and will be too big to fit in fast memory.
\end{comment}
 
% [ a Gilbert paragraph here ]
%precomputed tables can be used for storing the answers of the equation above, it will

We propose a much more efficient way of sampling $L$ from $f(x|z)$ based on a Monte-Carlo simulation technique 
called \emph{rejection sampling}~\cite{rejsample}.
% to address this issue. 
The idea of rejection sampling is that, 
%Instead of directly generating from $f(x|z)$, %the target distribution (whose pdf is $\phi(x)$)$W$with pdf $\psi(x)$ that is easy to sample from
we instead sample another RV $Y$ from another pdf $\psi(x|z)$ that is computationally easier to sample from than $f(x|z)$.
%The next step is crucial to compensate for the difference between $\phi(x)$ and $\psi(x)$. Rejection samplings 
Supposing the value of this sample is $x$.   Then this sample is accepted %: It accepts this sample (of $\psi(x)$)
with probability $\gamma = f(x|z) / (Q \psi(x|z))$ and rejected with probability $1 - \gamma$.
The rejection sampling step is 
repeated until a sample (of $Y$) is accepted, and the finally accepted sample is (the realized value of) $L$.
% that we are trying to generate.
Here, this constant $Q$ should be set such that $\gamma \le 1$ for all values of $x$ and $z$, or in other words
$Q \geq \max_{x, z} f(x|z) / \psi(x|z)$. 
In statistics, a key objective as well as challenge in designing a rejection sampling procedure is to select $\psi(x|z)$ so that 
$\max_{x, z} f(x|z) / \psi(x|z)$ and hence this $Q$ can be made as small as possible (since the accept probability
can be as small as $1/Q$).  

%Only a vague guideline exists in the statistics literature:  to make $\psi(x|z)$ as similar to $f(x|z)$ in shape as possible.  

%In some scenarios, it might be difficult to exactly compute $\max_x f(x|z) / \psi(x|z)$, so %but we can still use any upper bound of this fraction as 
%$M$ is defined as a conservative upper bound of $f(x|z) / \psi(x|z)$. This choice of $M$ saves the computation for the maximum, but the probability of acceptance is lower than optimum. 

%determined by the similarity between the two distributions. as possible so that $M$ is small and the  is large

%and the accepted samples still follow the target distribution. The only cost of using an upper bound is the efficiency, since the probability of accepting a sample is lower when $M$ is greater than the optimal.in which case 

%Otherwise, reject this sample and start all over again. \textbf{(copied) It can be proved that accepted samples are actually distributed according to the target density $f(x|z)$} if $M \geq f(x|z) / \psi(x|z)$ for any $x\in\mathbb{R}$ (so the probability for acceptance is always no greater than $1$). The overall acceptance probability (averaged on all values of $x$) is $1/M$ for each attempt, so in order for a high probability of acceptance, $M$ should be as small as possible. 

%From the descriptions above, since $f(x|z)$ is given by \autoref{eq:cauchysplit}, it remains to determine the distribution of $W$ and the constant $M$ to specify the rejection sampling procedure.

The $\psi(x|z)$ we propose is the pdf of the following mixture RV $Y$:  $Y$ is equal to $Y'$ or $Y' + z$ each with probability $1/2$ (depending on 
the value of $C$), where $Y'$ is a fresh RV with distribution $\mathrm{Cauchy}(0, n)$.  This $Y'$ can generated from $C$ via the aforementioned
inverse transform $Y' =F^{-1}_{Y'}(C)= n\tan(\pi(C-1/2))$;  note that, unlike the conditional inverse cdf $F^{-1}(x|z)$ described above, the unconditional inverse cdf 
$F^{-1}_{Y'}(C)$ here takes a much simpler form and hence
can be computed efficiently.
%In the case of splitting Cauchy RVs, we choose $W$ as %the splitting mechanism for Cauchy target distribution, we sample from as $\psi(x|z)$ 
%an unweighted mixture of two Cauchy distributions: One is $\mathrm{Cauchy}(0, n)$, and the other is $\mathrm{Cauchy}(z, n)$. $W$ can be sampled by first applying the inverse transform $W = n\tan(\pi(C-1/2))$ on a uniformly distributed RV $C$, and then adding $z$ to $W$ with probability $1/2$. 
%$\mathrm{Cauchy}(0, n)$ and $\mathrm{Cauchy}(z, n)$,  with center $0$ and scale $n$ and the other with center $z$ and scale $n$,
It can be shown that the conditional pdf of $Y$ is %\\&=
\begin{displaymath}\label{causp}\begin{split}
		% \frac{1}{2}\left(\frac{n}{\pi(n^2+x^2)}+\frac{n}{\pi(n^2+(z-x)^2)}\right)\\
	\psi(L=x|S=z) =\frac{\rho_n(x) + \rho_n(x-z)}{2}=	\frac{n}{2\pi}\cdot\frac{2n^2+x^2+(z-x)^2}{(n^2+x^2)(n^2+(z-x)^2)}.
\end{split}\end{displaymath}
%Since $f(x|z)$ and $\psi(x|z)$ have the same denominator that determines the position of peaks, the shapes of $f(x|z)$ and $\psi(x|z)$ are similar. To derive the constant $M$, we calculate the ratio
We set the parameter $Q$ to $2$ since for any $x$ and $z$,
\begin{equation}\begin{split}\label{eq:accratio}
		\frac{f(L=x|S=z)}{\psi(L=x|S=z)} = \frac{4n^2 + z^2}{2n^2+x^2+(z-x)^2} = \frac{4n^2 + z^2}{2n^2+z^2/2+2(x-z/2)^2} \leq 2,
	\end{split}
\end{equation}
so the average accept probability of a sample ($Y$) is at least $1/2$.

%which holds for any $z$ and $n$, and the inequality is tight when $x=z/2$. Therefore, we set $M=2$, and the probability of acceptance at each round is $1/2$. % (The expected number of rounds before acceptance is only $2$!). 

\subsection{Random Walk (RW)-DST}\label{sec:randw}

We now describe how to generate $L$ from $S = z$ and $C$ where the target distribution $X$ is a single-step random walk.
We first derive $f(L=x|S=z)$.  
%, or equivalently the uniform distribution on set $\{-1, 1\}$.  
%Before we start, we first clarify the following notation: 
%We first introduce a notation that can make our presentation easier.  
%Let $Y$ be a distribution and $c$ a constant.  Then $X+c$ denotes the distribution of $x+c$ where $x$ is drawn from $X$.  We also clarify the mathematical meaning of ``frequently occurring''. 
Since $X^{*n}$ has pmf $\rho_n(x) = 2^{-n}\binom{n}{(n-x)/2}$, and  $X^{*2n}$ has  pmf $\rho_{2n}(x) = 2^{-2n}\binom{2n}{(2n-x)/2}$, %Inserting $\rho_n(x)$ in \autoref{splitpost}, we have the pmf of the a posteriori distribution used in the splitting operation as
by \autoref{splitpost}, the conditional pmf 
%of $L | S$ is
%of $L=x$ given $S=z$ is
\begin{equation}\label{eq:rwsplit}
	\begin{split}
f(L=x|S=z) =\frac{ \rho_n(x)\rho_n(z-x)} {\rho_{2n}(z)} =\binom{n}{(n-x)/ 2} \binom{n}{(n-z+x)/2}  \bigg/ \binom{2n}{n-z/2},
\end{split}\end{equation}
if $z$ is an even integer such that $-2n\leq z\leq 2n$, $x$ is an integer such that $-n\leq x\leq n$ and $-n+z\leq x \leq n+z$, and $n-x$ is even;
otherwise $f(L=x|S=z) = 0$.
% and is zero otherwise.

%It is clearly computationally intensive to compute the inverse transform $F^{-1}(x|z)$ (given values of $z$ and $C$), since $F(x|z)$ is the sum of $O(n)$ terms
%and does not have a  

%sample directly from this conditional pmf without any precomputation, since computing the three binomial coefficients 
%can in theory take $O(n)$ time, although in practice it takes $O(\sqrn{n}$ time as we will explain shortly.
 
%and if we do resort to precomputation, then we need to store the precomputed tables only for all the probable values, which we
%will show is much smaller in number than the entire domain. 

We now introduce a concept that will become handy in the rest of this section.  
We say that $y$ is a {\it probable value} of a discrete RV $Y$, if
the probability $P(Y = y)$ is not vanishingly small or $0$.  
This concept is important here, because we will trade memory space for computation time by precomputing 
and storing some conditional probability values, and the memory cost could be greatly reduced if we store 
only those for probable values of $S$ and $L$ (conditioned upon $S$).
Now we analyze the asymptotic number of probable values of $S$ and $L$ when $n$ is a large number.  
For $S$, only integers that are no larger than $O(\sqrt{n})$ are probable, since its pmf $X^{*2n}$
converges to $\mathcal{N}(0, 2n)$ by the central limit theorem.  
Hence, by storing probability values only for the probable values of $S$, the space complexity reduces to $O(\sqrt{n})$ from $O(n)$.
The same can be said about $L$ 
%(conditioned upon $S$) 
since its conditional pmf can be shown to converge to $\mathcal{N}(z/2, n/2)$ in~\autoref{th:sqrt2ratio}.

We had tried the aforementioned tabular inverse transform method~\cite{marsaglia} 
%which is to carry out the aforementioned inverse transform using 
%a precomputed inverse cdf table (so that there is no need to compute the inverse online), 
on $f(L=x| S=z)$.
However, even when this probable value trick is used, the memory cost is still very high for most applications.
The total memory cost is $O(U)$, since for each of the $\log U$ values of $n$, we need to store the values of $f(L=x | S=z)$ for all combinations of 
$O(\sqrt{n})$ probable $z$ values and $O(\sqrt{n})$ probable $x$ values, and the largest $n$ value is $U$.  
For example, when $U = 2^{20}$,
the total size of the precomputed tables would still be several gigabytes. 

We propose a rejection sampling technique that, in combination with the tabular inverse transform and the probable value trick,
provides a fast, space-efficient, and accurate solution to this ERS problem. 
Like in \autoref{sec:cauchy}, the rejection sampling method is specified by the RV $Y$ whose conditional pdf (given $S=z$) is $\psi(x|z) = 2^{-n}\binom{n}{(n-x+2\lceil z/4\rceil)/2}$, and the constant $Q$ (defined later). 
$Y$ can be generated as $Y' + 2\lceil z/4\rceil$, where $Y'$ is a fresh RV with distribution $X^{*n}$ generated from $C$ by tabular inverse transform~\cite{marsaglia}.
% as $W$, because the shape of $X^{*n}$ roughly covers that in \autoref{eq:rwsplit}. The added term is because we observe that the distribution of $L|S$ is centered as and its conditional cdf  $z/2$, and it is rounded to the nearest even integer of $z/2$ because of the parity.
%Since $n$ is a power of $2$, $L|S$ only takes even values, so we can only add an even constant to $X^{*n}$. Once $W$ is fixed, 
Our next step is to determine $Q$ (as an upper bound on $\max_{x, z} f(x|z) / \psi(x|z)$) 
for each $n$ value and for all probable $z$ values (those that are $O(\sqrt{n}$) as explained 
earlier).
%, where $\psi(x|z)$ is the conditional pdf of $W$ given $S=z$. 
For all $n \geq 256$, we know from calculations and from \autoref{th:sqrt2ratio} that this maximum value is at most $1.47$. Hence, we set $Q = 1.47$ so that the average accept probability is at least $1/1.47 = 0.68$.
%for any $n \geq 256$ and  frequently occurring $z$ in $0, 2, \cdots, 2n$, and by \autoref{th:sqrt2ratio}, it converges to $\sqrt{2} = 1.414$ when $n\to\infty$, so we only lose at most $4\%$ relative efficiency if we just define $M$ as the conservative upper bound $1.47$ for all $z$. 
The rejection sampling operation is computationally efficient, because both $f(x|z)$ and $\psi(x|z)$ can be computed in $O(1)$ time if the factorials $i!$ and $(n-i)!$ 
are precomputed for probable $i$ values (that is $i = O(\sqrt{n})$).  
%for $i=1, 2, \cdots n$ have been precomputed.  Although the storage cost 
When $n\geq 256$, we use rejection sampling (with $Q = 1.47$).  When $n\leq 128$, we use the tabular inverse transform (with the probable value trick) 
since the table size grows as $O(n)$ as explained earlier.  When $U = 2^{20}$ like in the example above, the total size of the precomputed tables (for all 
20 values of $n$) is only several megabytes.

% The precomputed tables for sample $X^{*n}$ for $n=256, 512, \cdots, 2^{20}$ take \SI{4.94}{MB}, and the precomputed tables of factorials take \SI{4.00}{MB}, so %we have reduce the memory footprint of the sampling technique for a DST with universe $2^{20}$
%from $\SI{3.95}{GB}$ to $1.45 + 4.94+ 4.00  =$ \SI{10.40}{MB}!

\begin{theorem}\label{th:sqrt2ratio}
The maximum ratio $\max_{x=O(\sqrt{n})}f(x|z) / \psi(x|z)$ converges to $\sqrt{2} \approx 1.414$ when $n\to\infty$ and $z=O(\sqrt{n})$.
\end{theorem}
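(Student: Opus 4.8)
The plan is to replace both the numerator $f(x\mid z)$ from \autoref{eq:rwsplit} and the rejection proposal $\psi(x\mid z)$ by their local-central-limit (Gaussian) approximations, and then take the ratio, at which point all the binomial factors collapse to an explicit expression whose maximum over $x$ is transparent. First I would record the two limiting forms. By \autoref{th:split}, $f(x\mid z)$ is the conditional pmf of one of two i.i.d.\ copies of $X^{*n}$ given that their sum equals $z$; hence its conditional mean is $z/2$, its conditional variance tends to $n/2$ (the exact value is $(n^2-z^2/4)/(2n-1)$), and its support is the spacing-$2$ lattice $\{x : x\equiv n \pmod 2\}$. The local CLT (equivalently, Stirling applied directly to \autoref{eq:rwsplit}) therefore gives
\[
  f(x\mid z) = \frac{2}{\sqrt{\pi n}}\,\exp\!\Big(-\frac{(x-z/2)^2}{n}\Big)\,(1+o(1)),
\]
uniformly for $x=O(\sqrt n)$ and $z=O(\sqrt n)$; this is exactly the claimed convergence of the conditional pmf of $L$ to $\mathcal N(z/2,n/2)$. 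Meanwhile $\psi(x\mid z)=2^{-n}\binom{n}{(n-x+2\lceil z/4\rceil)/2}$ is just the pmf of $X^{*n}$ shifted by the even integer $2\lceil z/4\rceil$, so the ordinary local CLT gives, on the same window,
\[
  \psi(x\mid z) = \frac{2}{\sqrt{2\pi n}}\,\exp\!\Big(-\frac{(x-2\lceil z/4\rceil)^2}{2n}\Big)\,(1+o(1)).
\]

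Next I would divide. The lattice factors $2$ cancel, the Gaussian normalizers contribute $\sqrt{2\pi n}/\sqrt{\pi n}=\sqrt 2$, and the two exponentials combine, so that
\[
  \frac{f(x\mid z)}{\psi(x\mid z)} = \sqrt 2\,\exp\!\Big(-\frac{(x-z/2)^2}{n} + \frac{(x-2\lceil z/4\rceil)^2}{2n}\Big)\,(1+o(1)).
\]
Writing $u \triangleq x-z/2 = O(\sqrt n)$ and $\epsilon \triangleq 2\lceil z/4\rceil - z/2 \in \{0,1\}$, the exponent simplifies to $-u^2/(2n) - u\epsilon/n + \epsilon^2/(2n)$. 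Maximizing over $u$ yields $u^\ast = -\epsilon$ and maximal value $\epsilon^2/n = O(1/n)$, so the exponent is at most $O(1/n)$ over the whole window and vanishes in the limit. Hence $\max_{x=O(\sqrt n)} f(x\mid z)/\psi(x\mid z) \to \sqrt 2\cdot e^{0}=\sqrt 2$, which is the claim.

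The main obstacle is not the algebra but making the two local-CLT statements \emph{uniform} over the entire $x=O(\sqrt n)$ window, since pointwise convergence of the ratio does not by itself give convergence of its maximum. I would obtain this from Stirling's formula with an explicit remainder (a local CLT with error term): on the central $O(\sqrt n)$ scale, where $x-z/2$ and $x-2\lceil z/4\rceil$ are only $O(1)$ standard deviations from their respective means and the underlying binomial proportions stay bounded away from $0$ and $1$, the relative error in each log-density is $o(1)$ uniformly. A secondary, purely bookkeeping point is the parity alignment: the shift $2\lceil z/4\rceil$ is chosen precisely so that both $f$ and $\psi$ are supported on $x\equiv n\pmod 2$, and since it differs from $z/2$ by at most $1$ this $O(1)$ offset is washed out once divided by $n$ in the exponent.
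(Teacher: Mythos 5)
Your proposal is correct and takes essentially the same approach as the paper's proof: both replace $f(x\mid z)$ and $\psi(x\mid z)$ by their de Moivre--Laplace (local CLT) Gaussian limits and then maximize the resulting explicit ratio, the paper handling the two parity cases of $z$ separately (max at $x=z/2$ giving $\sqrt{2}$ when $4\mid z$, max at $x=2\lfloor z/4\rfloor$ giving $\sqrt{2}\exp(1/n)$ otherwise), which is exactly your unified $\epsilon\in\{0,1\}$ computation. If anything, your write-up is slightly more careful than the paper's: you keep the lattice-spacing factor of $2$ in both pmf approximations (the paper drops it, harmlessly, since it cancels in the ratio) and you explicitly note the need for uniformity of the local CLT over the $x=O(\sqrt{n})$ window, a point the paper passes over in silence.
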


% The proof for the convergence of $\max_x f(x|z) / \psi(x|z)$ is long and tedious, so we only sketch the main ideas here.
%We only provide a sketch of the proof here due to the space limit. 
 \begin{proof}
By de Moivre-Laplace Theorem~\cite{demoivre}, when $n\to\infty$ and $x = O(\sqrt{n})$ is a probable value, $\rho_n(x)$ converges to the pdf of $\mathcal{N}(0, n)$, which is $1/\sqrt{2\pi n}\cdot \exp(-x^2/(2n))$. As a result, $f(x|z)$ in \autoref{eq:rwsplit} converges to $1/\sqrt{\pi n}\cdot \exp(-(x - z/2)^2/n)$, the conditional pdf for splitting Gaussian RVs. Meanwhile, $\psi(x|z)$, the conditional pdf of $Y$ converges to $1/\sqrt{2\pi n}\cdot \exp(-(x-2\lceil z/4\rceil)^2/(2n))$.
If $z$ is a multiple of $4$, the maximum ratio is achieved on $x=z/2$, and the ratio is $\sqrt{2}$. Otherwise, $z$ is an even number but not a multiple of $4$, the maximum ratio is achieved on $x=2\lfloor z/4 \rfloor$, and the ratio is $\sqrt{2}\exp(1/n)$, which converges to $\sqrt{2}$ when $n\to\infty$.
\end{proof}
%and show the maximum is always , so the maximum is $2^n \binom{n}{n/2 - z/4}^2\big/\left( \binom{2n}{n-z/2}\binom{n}{n/2}\right)$, which converges to $\sqrt{2}$ by Stirling's formula when $z = O(\sqrt{n})$ (because $z$ is frequently occurring) and $n\to\infty$. Second, if , the conditional distribution in \autoref{eq:rwsplit} looks like a random walk with an odd number of steps, but $W$ still has an even number of steps.
% by a  proof that we do not show here for the interest of space.

%\textbf{Note: For dyadic simulation, adding a new level of resolution half-way is easy and does not invalidate the existing sketches. This leaves possibilities for a %future work of adaptive selection of the precision level (the length of the dyadic interval for a leaf vertex) based on the inputs.}

\subsection{Speed of Dyadic Simulation}\label{sec:dyaperf}

Recall that our idealized and impractical assumption is that 
%As explained in \autoref{sec:erssolutions}, if 
we can somehow remember the value of every $C_{\alpha*}$ after it was first generated, with which
we can rigorously prove that $X_0, X_1, \cdots, X_{U-1}$ are i.i.d.
As mentioned in~\autoref{sec:indepguarantee}, this assumption 
can be removed by
%As explained in \autoref{sec:erssolutions}, if 
instead computing each such $C_{\alpha*}$ as $h(\alpha)$, where $h(\cdot)$ is a hash function.
%We first briefly explain two standard hashing techniques for building a DST.
%The first such technique is to use ``off-the-shelf" hash function families such as wyhash~\cite{}.

We have implemented the DST framework using 
%an aforemention practical hash functions
%In all of our implementations, we use 
an off-the-shelf hash function family called {\tt wyhash}~\cite{wyhash}.  
{\tt wyhash} offers two attractive advantages.
First, computationally {\tt wyhash} is very efficient:  It takes roughly two nanoseconds for {\tt wyhash} to compute a hash value~\cite{wyhash}.
Second, it guarantees excellent empirical independence among the values of ($C_{\alpha*}$)'s generated~\cite{smhasher}.  %As a result, the underlying RVsgenerated by a DST are statistical indistinguishable from i.i.d. RVs according to the Kolmogorov-Smirnov tests~\cite{demoivre} we have performed on them.
To further improve this empirical independence, we use a different (independent) hash function at each level of the DST.
The storage cost of a DST is tiny, since each hash function uses only a $32$-bit random seed that needs to be remembered.
\autoref{tab:perf} shows the average amount of time it takes for a DST to split a Gaussian, Cauchy, and random walk RV respectively, measured on a workstation running Ubuntu 18.04 with 
Intel(R) Core(TM) i9-10980XE \SI{3.00}{GHz} CPU.
It is a few times faster to split a Gaussian than to split the other two, largely because the other two involve rejection sampling, which is 
a relatively computationally intensive process.  

%As described above, our implementation of DST uses wyhash~\cite{wyhash} to generate empirically independent samples $L$. 
%We measure the speed on the same running environment as in Section\,\ref{subsec:exp_set}. 

\begin{comment}
 our dyadic simulation algorithm for the three distributions. 
Each result reported is the average value for $100$ independent DSTs with one million range-sum computations per DST. 
We set the universe size $U=2^{20}$ and randomly choose the range-sum queries such that each of which is calculated with $20$ splits.

From \autoref{tab:perf}, we can see that the Gaussian  distribution can be implemented really efficiently because of the ``nice'' properties, while the other two target distributions that require rejection sampling take more than $4.4$ times as long time for each split.

As mentioned in ???intro, it provides excellent empirical independence among the underlying RVs 
 which takes roughly two nanoseconds to generate a hash 
value~\cite{wyhash}.   {\bf Although wyhash does not have theoretical guarantees, it can be shown by Kolmogorov-Smirnov test that the underlying random variables generated this way are empirically indistinguishable from being mutually independent for all applications in this paper.}
We use one hash function for each level of the  just one $32$-bit word as the seed, as the low-level hash functions.

\end{comment}

\begin{table}[!ht]
	\caption{Average split time of an RV.}\label{tab:perf}
	\centering
	{\begin{tabular}{|l|c|c|c|}
			\specialrule{.1em}{.05em}{.05em} 
			Distribution &  Gaussian & Cauchy & Random Walk\\ \hline
			Time per split (\SI{}{ns})& 4.8 & 24.8 & 21.2\\
			\specialrule{.1em}{.05em}{.05em} 
		%	Exponential & 6.6\\

	\end{tabular}}
\end{table}
\section{$k$-wise Independence Theory for DST}\label{sec:dstimpl}

At the end of the previous section, we have shown that, by using a per-level {\tt wyhash} function to hash a prefix $\alpha*$ into a uniform random string $C_{\alpha*}$,
our solutions have high performance and the underlying RVs are empirically independent.  However, {\tt wyhash} does not provide any theoretical guarantee concerning
independence.  In this section, we describe our novel $k$-wise independence theory that provides both high computational efficiency and strong provable
independence guarantees. Our $k$-wise theory for DST is motivated by the fact that most ERS applications do not require the underlying RVs to be all-wise independent.  For example, it can be 
shown (e.g., using arguments similar to those in Theorem 2.2 in~\cite{ams}) that, that for the $L_2$-norm estimation, 
when the estimator is the standard quadratic polynomial (of the accumulators) $\hat{d}^2_2  = (A^2_1 + A^2_2 + \cdots + A^2_r)/r$, 
the streaming algorithm described in~\autoref{sec:application} achieves the same statistical efficiency whether the underlying Gaussian RVs are $4$-wise independent or all-wise independent.

 To this end, our idea is to use $\log_2 U$ $k$-wise independent hash functions (instead of {\tt wyhash}).
%The next two grades can guarantee, besides empirical independence as the bottomline, provable independence among the underlying RVs. \textbf{per DST, one on each level}the only theoretical guarantee required by the RW-LSH algorithm for ANNS is that 
A $k$-wise independent hash function $h(\cdot)$ has the following property:  Given an arbitrary set of $k$ 
different keys $i_1, i_2, \cdots, i_k$, their hash values $h(i_1), h(i_2), \cdots, h(i_k)$ are independent.   
Such hash functions are very computationally efficient when $k$ is a small number
such as $k=2$ (roughly 2 nanoseconds per hash just like {\tt wyhash}) and $k = 4$ (several nanoseconds per 
hash)~\cite{univhash,tabulation4wise,tabulationpower}.

Like in the earlier case (of using {\tt wyhash}), a different per-level hash function $h^l(\cdot)$, that is $k$-wise independent, 
is used at each level $l$ of the DST, and each random string $C^l_{(i*)}$ (used to generate $L^l_{(i*)}$) is hash-generated as $ h^l(i)$.
This construction weakens property (II) slightly.
%to make it 
%practically implementable. 
The weakened one, called \emph{property (II*)}, is that, at any level $l$, any $k$ distinct range-sums $S^l_{(i_1*)}, S^l_{(i_2*)}, \cdots, S^l_{(i_k*)}$ 
are split conditionally independently.  
This construction can guarantee property (II*), 
because their ``split seeds'' $C^l_{(i_1*)}, C^l_{(i_2*)}, \cdots, C^l_{(i_k*)}$  are 
not only independent among themselves (thanks to $h^l(\cdot)$ being $k$-wise independent) but also 
independent of $S^l_{(i_1*)}, S^l_{(i_2*)}, \cdots, S^l_{(i_k*)}$  (since $h^l(\cdot)$ is a fresh hash function that has never been used in 
hash-generating any such $S^l_{(i*)}$).   
With this construction, the DST has the following nice $k$-wise independence property at every level.  

\begin{theorem}\label{th:nwise}
	%The singletons of the top-down simulation tree are $k$-wise independent, if the random strings $C_\alpha$ is generated from one $k$-wise independent hash function on each level of the simulation tree.
	If every $h^l(\cdot)$, $1 \leq l < \log_2 U$, is $k$-wise independent, then for any $l$, $1 \leq l \leq \log_2 U$, 
	the $2^l$ range-sums $S^l_{(0*)}$, $S^l_{(1*)}$, $\cdots$, $S^l_{(\lambda_l*)}$ are $k$-wise independent. %the following two properties hold.
%	\begin{enumerate}
	%	\item T
	%	\item The generated left-children $L^l_{(0*)}$, $L^l_{(1*)}$, $\cdots$, $L^l_{(m_l*)}$ are also $k$-wise independent on each level $l$.
%	\end{enumerate}  
\end{theorem}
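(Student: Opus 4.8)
The plan is to prove the statement by induction on the level $l$, closely paralleling the proof of \autoref{th:dyasim} but replacing every ``all-wise'' independence claim by its $k$-wise counterpart, so that the argument only ever reasons about at most $k$ range-sums at a time. For the base case $l=1$, the two range-sums $S^1_{(0*)}$ and $S^1_{(1*)}$ arise from splitting the root $S_*$ (distributed as $X^{*U}$), so they are independent by property (I) (\autoref{th:split}); since there are only two of them, this already establishes $k$-wise independence for any $k\ge 2$.

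For the inductive step I would assume that $S^l_{(0*)}, S^l_{(1*)}, \cdots, S^l_{(\lambda_l*)}$ are $k$-wise independent and then fix an arbitrary set of $k$ distinct children $T_1, T_2, \cdots, T_k$ at level $l+1$; the goal is to show they are mutually independent. First I would collect the distinct parents of $T_1, \cdots, T_k$ at level $l$, say $S^l_{(i_1*)}, \cdots, S^l_{(i_m*)}$; since each child has exactly one parent, there are $m\le k$ of them. By the induction hypothesis these $m\le k$ parents are mutually independent. Moreover, property (II*) guarantees that any $k$ (and hence any $m\le k$) distinct range-sums at level $l$ are split conditionally independently, because their split seeds $C^l_{(i_1*)}, \cdots, C^l_{(i_m*)}$ are $k$-wise independent among themselves and independent of the parents. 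Applying \autoref{lem:induction} to these $m$ parents (with its ``$k$'' instantiated as $m$, when $m\ge 2$) then yields that all $2m$ of their children are mutually independent; the degenerate case $m=1$, where $T_1,T_2$ are the two children of a single parent, is handled directly by property (I) (\autoref{th:split}). Either way, $T_1, \cdots, T_k$ form a subset of a set of mutually independent children, so they are mutually independent, which completes the induction.

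The main obstacle---and the conceptual crux that distinguishes this proof from that of \autoref{th:dyasim}---is the localization to at most $k$ objects. In \autoref{th:dyasim} the splitting of \emph{all} $2^l$ parents was used simultaneously, which implicitly required all-wise independence of the split seeds. Here I must instead argue that restricting attention to the $m\le k$ distinct parents of the chosen children is sound, i.e.\ that both hypotheses of \autoref{lem:induction}---mutual independence of the parents and conditionally independent splitting---depend only on those $\le k$ range-sums and their $\le k$ seeds. Verifying this is exactly where $k$-wise independence of each $h^l$ is invoked: it supplies the mutual independence of the at most $k$ relevant seeds and their joint independence from the parents, which is precisely what property (II*) asserts. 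I expect the only additional care needed is the routine observation that $k$-wise independence of a hash family is inherited by every sub-collection of size at most $k$, so that the same argument covers every $m$ in the range $1\le m\le k$.
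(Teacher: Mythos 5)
Your proof is correct and follows essentially the same route as the paper's: induction on the level, fixing $k$ children at level $l+1$, collapsing them to at most $k$ distinct parents that are independent by the induction hypothesis and split conditionally independently by property (II*), then invoking \autoref{lem:induction}. Your explicit treatment of the degenerate case $m=1$ (where \autoref{lem:induction}, stated for $k>1$, does not literally apply and property (I) is used instead) is a small point of extra care that the paper's proof glosses over, but it does not change the argument.
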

\begin{proof}
%We only prove the first part, since any $k$ distinct elements in $\{L^l_{(0*)}$, $L^l_{(1*)}$, $\cdots$, $L^l_{(m_l*)}\}$ are the $k$ distinct elements in of the first part
%... {\bf JFM: I think we do not need part (2) now if we use \autoref{lem:induction}.}
The proof is similar to that of \autoref{th:dyasim} by induction. 
For the base case when $l=1$, there are two dyadic range-sums at the $1^{st}$ level: $S^1_{(0*)}$ and $S^1_{(1*)}$. Since they result from splitting $S_*$, which follows $X^{*U}$ (by initialization), $S^1_{(0*)}$ and $S^1_{(1*)}$ are i.i.d. RVs according to property (I).

%Since $h^1(\cdot)$ is $k$-wise independent with $k\geq 2$, the two random samples $W^1_0=h^1(0)$ and $W^1_1=h^1(1)$ are independent. The only two left-children on the first level $L^1_0$ and $L^1_1$ are independent, because $L^1_0$ is a deterministic function of the random vector $\langle S^1_0, W^1_0\rangle$,  $L^1_1$ is a deterministic function of $\langle S^1_1, W^1_1\rangle$, and these two random vectors are independent. 

%Since they result from splitting $S_*$, which follows $X^{*U}$ (by initialization), $S^1_0$ and $S^1_1$ are i.i.d. RVs that each follows distribution $X^{*(U/2^l)}$ according to \autoref{th:split}.

%When $l=0$, there is only one range-sum $S_* = S^0_{(0*)}$ on this level, so the induction assumption is trivially satisfied. that each follows distribution $X^{*(U/2^l)}$  which are i.i.d. according to the proof of \autoref{th:dyasim}.  because the random samples $L$ are drawn from a $k$-wise independent hash function

Now, we prove the induction on level $l+1$ from level $l$. 
%We first prove property (1). 
For any fixed set of $k$ indices $i_1$, $i_2$, $\cdots$, $i_k$ on level $l+1$, we need to prove $S^{l+1}_{(i_1*)}$, $S^{l+1}_{(i_2*)}$, $\cdots$, $S^{l+1}_{(i_k*)}$ are independent.  This follows from \autoref{lem:induction}, since these $k$ elements are the children of no more than $k$ parents after duplicates are removed. These parents, no more than $k$ in number, are
independent by the induction hypothesis and are independently split by property (II*).  %Note that we are allowed to invoke \autoref{lem:induction}, since it does not require property (II), which is weakened
%by using $k$-wise independent hash functions.
\end{proof}

The theorem above implies that the underlying RVs $ X_0, X_1, \cdots, X_{U-1}$, which are the $U$ singleton range-sums at level $\log_2 U$, are also $k$-wise independent. 
% $k=4$, it provides the first one of the two theoretical guarantees above for $L_2$-norm estimation.
The following theorem is a surprising result, since although it requires only $2$-wise independence, it provides a 
very strong statistical guarantee.  As mentioned earlier in~\autoref{sec:application}, when $X$ is a single-step random walk, this guarantee satisfies the requirement of MP-RW-LSH.
The following theorem is an immediately corollary of the lemma that follows it.

\begin{comment}
For example, if we use a different 2-wise (the weakest such hash function possible) at each level of the DST
for binary splits, then for any interval $[a, b)$ (not necessarily dyadic), we can prove that the following property:  
The marginal distribution of $S[a, b)$ is strictly 
$X^{*(b-a)}$, which is the same as if the underlying RVs were strictly i.i.d.  This result is important, since we will show that this property is the only
statistical property concerning the underlying RVs that our LSH application uses.  Hence, as far as this LSH scheme is concerned, the underlying 
RVs are as good as they are strictly i.i.d.  

\autoref{th:twowise} readily leads to the following corollary that is both important and powerful.  It is important in that this marginal distribution is the only
statistical property concerning the underlying RVs that our LSH application uses  as we will show in~\autoref{sec:ann}.
It is powerful in that $2$-wise is the weakest possible theoretical guarantee that a hash function can provide, yet that is enough for the LSH application.
\end{comment}

\begin{theorem}\label{cor:stdist}
 If every $h^l(\cdot)$, $1\leq l < \log_2 U$, is $2$-wise independent, then for any integers $a$ and  $b$ such that $0\leq a \leq b < U$, the range-sum $S[a, b)$ %of singletons $X_a + X_{a+1}+ \cdots + X_{b-1}$ 
has marginal distribution $X^{*(b-a)}$. %the random strings $C_\alpha$ is generated from one $2$-wise independent hash function on each level of the simulation tree. 
	%In other words, the DST guarantees the distribution of range-sums for applications.
\end{theorem}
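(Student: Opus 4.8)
The plan is to prove a \emph{self-strengthening} invariant by induction down the levels of the DST, from which \autoref{cor:stdist} falls out at level $l=\log_2 U$ (where $B^{\log_2 U}_{a,b}=S[a,b)$ and $U/2^{\log_2 U}=1$). The bare marginal claim ``$\sum_{i=a}^{b-1}S^l_{(i*)}\sim X^{*((b-a)U/2^l)}$'' is too weak to induct on: a block sum at level $l+1$ decomposes into a block at level $l$ plus contributions from the two level-$l$ nodes straddling its endpoints, so I must track how those straddling nodes relate to the block. Hence I would carry the invariant that, for every $0\le a\le b\le 2^l$, the (at most) three RVs
\[ S^l_{((a-1)*)},\qquad B^l_{a,b}\triangleq\sum_{i=a}^{b-1}S^l_{(i*)},\qquad S^l_{(b*)} \]
— the block together with its immediate left- and right-neighbour nodes — are \emph{mutually} independent, with marginals $X^{*(U/2^l)}$, $X^{*((b-a)U/2^l)}$, and $X^{*(U/2^l)}$ respectively (dropping a neighbour when $a=0$ or $b=2^l$). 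The base case $l=1$ is immediate, since the two level-$1$ nodes are i.i.d.\ by property (I), so any contiguous sub-collection is mutually independent.

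For the inductive step I would first record a purely deterministic bookkeeping identity. Writing $a'=\lceil a/2\rceil$ and $b'=\lfloor b/2\rfloor$, the level-$(l+1)$ block equals the full set of children of the interior parents $a',\dots,b'-1$ (whose sum is exactly $B^l_{a',b'}$), plus the right child of parent $a'-1$ when $a$ is odd and the left child of parent $b'$ when $b$ is odd; moreover the two level-$(l+1)$ neighbours $S^{l+1}_{((a-1)*)}$ and $S^{l+1}_{(b*)}$ are themselves children of these same two ``cut'' parents $a'-1$ and $b'$. Thus every RV I must control at level $l+1$ is a child of a parent in $\{a'-1\}\cup[a',b')\cup\{b'\}$, which is precisely the level-$l$ block $B^l_{a',b'}$ with its two neighbours — exactly the object the induction hypothesis describes.

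I would then apply the hypothesis at level $l$ to get $(S^l_{((a'-1)*)},B^l_{a',b'},S^l_{(b'*)})$ mutually independent, and split the two neighbour parents $a'-1$ and $b'$ while leaving $B^l_{a',b'}$ untouched (its children need not be resolved, since their total is $B^l_{a',b'}$ itself). The two split seeds $h^l(a'-1)$ and $h^l(b')$ are independent of each other by $2$-wise independence of $h^l$, and jointly independent of all level-$l$ values because $h^l$ is a fresh function; together with the hypothesis this is exactly the ``split conditionally independently'' premise of \autoref{lem:induction}. Since the two children of any single parent are i.i.d.\ (\autoref{th:split}, property (I)), I would conclude that the five RVs — both children of $a'-1$, the whole $B^l_{a',b'}$, and both children of $b'$ — are mutually independent with the correct marginals. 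Reading off $S^{l+1}_{((a-1)*)}$, $B^{l+1}_{a,b}$, and $S^{l+1}_{(b*)}$ as functions of pairwise-disjoint subsets of these five (the block uses the inner child of each odd endpoint, each neighbour the remaining child) re-establishes the invariant, and the count $2(b'-a')+[a\text{ odd}]+[b\text{ odd}]=b-a$ fixes the convolution order.

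The main obstacle — and the reason the statement is flagged as surprising — is that $2$-wise independence is far too weak to make the $\Theta(\log U)$ dyadic-cover pieces of $[a,b)$ \emph{jointly} independent, so one cannot simply convolve those pieces. The resolution is the delicate design of the invariant: by carrying the block \emph{together with exactly its two neighbouring nodes}, each inductive step splits only the \emph{two} cut parents, so only two evaluations of $h^l$ ever occur and $2$-wise independence is exactly what is needed. The one case requiring separate care is when the two cut parents coincide — which happens only for an empty block with odd endpoints, i.e.\ $a=b$ odd — and there I would invoke property (I) directly, since the two relevant children then come from a single parent and are i.i.d.\ by construction.
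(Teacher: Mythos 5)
Your proposal is correct and is essentially the paper's own argument: your invariant (block sum plus its two boundary nodes, mutually independent with the stated marginals) is exactly Lemma~\ref{th:twowise} up to an index shift, and your inductive step — splitting only the two cut parents with a fresh $2$-wise independent $h^l$, invoking Lemma~\ref{lem:induction} via property (II*), and reading off the new triple from the parity-dependent deterministic identity — mirrors the paper's proof step for step. The only substantive difference is that you explicitly treat the coinciding-cut-parents case ($a=b$ odd), which the paper only implicitly absorbs by removing duplicates.
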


\begin{lemma}\label{th:twowise}
	If every $h^l(\cdot)$ is $2$-wise independent, then	for any $1\leq l <\log_2 U$ and  $0 \leq a, b\leq U$ such that $a \leq b$, then the following two properties hold.
	\begin{enumerate}
	\item The three RVs $\sum_{i=a+1}^{b-1} S^l_{(i*)}$, $S^l_{(a*)}$, and $S^l_{(b*)}$ are  independent.
	\item The range-sum $S[(a+1)U/2^l, bU/2^l) = \sum_{i=a+1}^{b-1} S^l_{(i*)}$  %of singletons $X_a + X_{a+1}+ \cdots + X_{b-1}$ 
	follows distribution $X^{*(U(b-a-1)/2^l)}$, where $U(b-a-1)/2^l$ is the number of underlying RVs contained in the range $[(a+1)U/2^l, bU/2^l)$.
	\end{enumerate}
	  %the random strings $C_\alpha$ is generated from one $2$-wise independent hash function on each level of the simulation tree. 
	%In other words, the DST guarantees the distribution of range-sums for applications.which is the sufficient condition to apply \autoref{th:split} to show 
\end{lemma}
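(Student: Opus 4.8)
The plan is to prove both parts simultaneously by induction on $l$, in the same spirit as the proof of \autoref{th:dyasim} but tracking only the specific triple $\bigl(\sum_{i=a+1}^{b-1} S^l_{(i*)},\, S^l_{(a*)},\, S^l_{(b*)}\bigr)$ instead of the whole level. For the base case $l=1$, property (I) applied to the root split makes $S^1_{(0*)}$ and $S^1_{(1*)}$ i.i.d.\ with distribution $X^{*(U/2)}$; for the admissible pairs $(a,b)$ the middle sum is then either empty (a degenerate constant) or one of these two RVs, so both claims hold, under the convention that a boundary index falling outside $\{0,\dots,2^l\}$ contributes an absent (empty) term. Taking $l=\log_2 U$ and relabeling indices in part~(2) yields \autoref{cor:stdist}.

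The heart of the induction is a telescoping identity expressing the level-$l$ triple through the level-$(l-1)$ triple while touching only \emph{two} parent splits. Writing $a'=\lfloor a/2\rfloor$, $b'=\lfloor b/2\rfloor$, the boundary RV $S^l_{(a*)}$ is a child of $P_a \triangleq S^{l-1}_{(a'*)}$ and $S^l_{(b*)}$ a child of $P_b \triangleq S^{l-1}_{(b'*)}$, while collapsing each intermediate parent into its two children gives
\[
\sum_{i=a+1}^{b-1} S^l_{(i*)} = \underbrace{\sum_{j=a'+1}^{b'-1} S^{l-1}_{(j*)}}_{M} + [a \text{ even}]\cdot S^l_{((2a'+1)*)} + [b \text{ odd}]\cdot S^l_{(2b'*)},
\]
i.e.\ the level-$l$ middle sum equals the level-$(l-1)$ middle sum $M$ plus at most one ``inner sibling'' from each boundary parent. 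This is exactly why only $2$-wise independence is required: at every level the construction disturbs only the two parents $P_a,P_b$, and $2$-wise independence of $h^{l-1}$ is precisely enough to make their two split seeds behave as fresh and mutually independent.

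For the induction step I would first assemble the independence inputs. By hypothesis $(M,P_a,P_b)$ are mutually independent with $M\sim X^{*(U(b'-a'-1)/2^{l-1})}$. The per-level hash supplies the seeds $C_a=h^{l-1}(a')$ and $C_b=h^{l-1}(b')$; when $a'\neq b'$ (the case $a'=b'$ forces $b=a+1$, an empty middle sum whose two boundaries are the i.i.d.\ children of one parent, handled directly), $2$-wise independence makes $C_a,C_b$ independent of each other, and freshness of $h^{l-1}$ makes $(C_a,C_b)$ independent of the entire level-$(l-1)$ configuration. Hence $M$, the pair $(P_a,C_a)$, and the pair $(P_b,C_b)$ form three mutually independent blocks. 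Applying \autoref{th:split} to each parent, the children $(A_0,A_1)$ of $P_a$ are i.i.d.\ $X^{*(U/2^l)}$ and likewise $(B_0,B_1)$ of $P_b$, so $M,A_0,A_1,B_0,B_1$ are mutually independent. A short case analysis on the parities of $a,b$ then shows the three level-$l$ RVs are functions of \emph{disjoint} subsets of $\{M,A_0,A_1,B_0,B_1\}$ — for instance, when $a$ is even and $b$ is odd they are $A_0$, $B_1$, and $M+A_1+B_0$ — so they are mutually independent, proving part~(1). Part~(2) follows by convolution: the middle sum is an independent sum of $M$ with up to two level-$l$ children, and in each parity case the number of underlying RVs counts out to $(b-a-1)U/2^l$, so the distribution is $X^{*((b-a-1)U/2^l)}$.

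The step I expect to be most delicate is verifying the disjointness of the variable sets across all four parity combinations while keeping the singleton count consistent: this is exactly where an off-by-one slip would break either the independence or the distributional claim. The only other care needed is the bookkeeping of boundary indices, where $S^l_{(a*)}$ or $S^l_{(b*)}$ degenerates to an absent term and the triple specializes to a pair — the same mechanism that, read as a count of disturbed nodes, gives the ``two splits per level'' bound behind \autoref{cor:2logntime}.
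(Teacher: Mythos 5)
Your proposal is correct and takes essentially the same route as the paper's own proof: induction on the level, the same parity-based telescoping identity for the middle sum (the paper writes your Iverson brackets as indicator functions $\mathbf{1}_{even}(a)$, $\mathbf{1}_{odd}(b)$), the same use of $2$-wise independence of the fresh per-level hash to make the two boundary-parent splits behave independently, and the same conclusion that the three RVs are functions of disjoint independent pieces, with part~(2) following by convolution and singleton counting. The only cosmetic difference is that you argue via the seeds and \autoref{th:split} directly where the paper routes the same content through property (II*) and \autoref{lem:induction}.
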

%[moved earlier]
%
Before we start the proof, we note that \autoref{fact:marginal}, which states that 
the marginal distribution of any range-sum $S^l_{(i*)}$ is $X^{*(U/2^l)}$, continues to hold despite the weakening of property (II) in this section.  
%{\color{red}This is because \autoref{fact:marginal} was proved using only property (I) of the split procedure (stated in the remark after the proof), which continues to hold.}
%Since in this section we are weakening only property (II), this fact remains to hold.
%first clarify a subtle point. On a hashing-based DST, although \autoref{th:dyasim} does not hold, the marginal distribution of any range-sum $S^l_{(i*)}$ is still $X^{*(U/2^l)}$, so the two children generated in each split are independent by \autoref{th:split}. This claim can be proved by simple induction. For the base case, the root $S^0_0$ follows $X^{*U}$ by initialization, so its two children on the first level follow i.i.d. $X^{*(U/2)}$ by \autoref{th:split}. To show any range-sum on the $(l+1)^{th}$ level follows $X^{*(U/2^{l+1})}$, it suffices to apply \autoref{th:split} on its parent, which follows $X^{*(U/2^l)}$ by the induction assumption.~\autoref{th:split} property) (I) (of the split)(in the sense that none of the three RVs 
%depends on it)
\begin{proof}	
	The proof for \autoref{th:twowise} is by induction on $l$. For the base case when $l=1$, the three RVs, after 0's and duplicates are removed, belong to the set of 
	two range-sums on the first level, $S^1_{(0*)}$ and $S^1_{(1*)}$.  These two range-sums  
	have distribution i.i.d. $X^{*(U/2)}$ thanks to property (I). This leads to the two properties on the first level. 
	
	We now prove the case of level $l+1$ from that of level $l$. %We use the same notation as in the proof of \autoref{th:sum0a}. 
	We first define the following  notations: $a'\triangleq \lfloor a/2\rfloor$, so $S^l_{(a'*)}$ is the parent of $S^{l+1}_{(a*)}$;  $b'\triangleq \lfloor b/2\rfloor$; and $\tilde{i}$ is defined as $i+1$ if $i$ is even (the younger of the siblings), and as $i-1$ otherwise, so $S^l_{(\tilde{i}*)}$ is always the other sibling of $S^l_{(i*)}$.
	The induction claim of the first property that $\sum_{i=a+1}^{b-1} S^{l+1}_{(i*)}$ and $\langle S^{l+1}_{(a*)}, S^{l+1}_{(b*)}\rangle$ are independent holds, due to the following two facts: (i) $\sum_{i=a'+1}^{b'-1} S^l_{(i*)}, S^{l+1}_{(a*)}$, $S^{l+1}_{(\tilde{a}*)}$, $S^{l+1}_{(b*)}$, and $S^{l+1}_{(\tilde{b}*)}$ are  independent; (ii) $\sum_{i=a+1}^{b-1} S^{l+1}_{(i*)}$ is a deterministic function of $\langle \sum_{i=a'+1}^{b'-1} S^l_{(i*)}, S^{l+1}_{(\tilde{a}*)}, S^{l+1}_{(\tilde{b}*)}\rangle$ in the sense that 
		$\sum_{i=a+1}^{b-1} S^{l+1}_{(i*)} = \sum_{i=a'+1}^{b'-1} S^l_{(i*)} + S^{l+1}_{(\tilde{a}*)} \mathbf{1}_{even}(a)+ S^{l+1}_{(\tilde{b}*)} \mathbf{1}_{odd}(b)$, where the indicator function $\mathbf{1}_{even}(a)$ is $1$ if $a$ is an even integer and is $0$ otherwise, and $\mathbf{1}_{odd}(b)$ is similarly defined.
	
We now prove fact (i). By the induction assumption, the three RVs $\sum_{i=a'+1}^{b'-1} S^l_{(i*)}$, $S^l_{(a'*)}$, and $S^l_{(b'*)}$ are  independent. Since $h^{l}(\cdot)$ is a fresh $2$-wise independent hash function, property (II*) holds for $k=2$, so the four children $S^{l+1}_{(a*)}$, $S^{l+1}_{(\tilde{a}*)}$, $S^{l+1}_{(b*)}$, and $S^{l+1}_{(\tilde{b}*)}$ are  independent by \autoref{lem:induction}. 
	%the five RVs, including the three above and two random samples $W^l_{(a'*)} = h^l(a')$ and $W^l_{(b'*)} = h^l(b')$, are mutually independent. , since their parents $S^l_{(a'*)}$ and $S^l_{(b'*)}$ are mutually independent.  The induction assumption further implies 
Furthermore, these four children and $\sum_{i=a'+1}^{b'-1} S^l_{(i*)}$ together 
	are  independent, because by property (II*), $\langle S^{l+1}_{(a*)}, S^{l+1}_{(\tilde{a}*)}\rangle$ is a function of only $S^l_{(a'*)}$ but not $\sum_{i=a'+1}^{b'-1} S^l_{(i*)}$, which is composed of the other range-sums, and similarly $\langle S^{l+1}_{(b*)}, S^{l+1}_{(\tilde{b}*)}\rangle$ is a function of only $S^l_{(b'*)}$. By the induction assumption, $\sum_{i=a'+1}^{b'-1} S^l_{(i*)}$ is independent of these four children.

	The induction claim of the second property that $\sum_{i=a+1}^{b-1} S^{l+1}_{(i*)}$ has distribution $X^{*n}$ with $n = (b-a-1)U/2^{l+1}$ holds, because of the induction assumption that $\sum_{i=a'+1}^{b'-1} S^l_{(i*)}$ has distribution $X^{*n'}$ where $n' = (b'-a'-1)U/2^{l}$ is the number of $(X_i)'s$ (underlying RVs) contained in its range $[(a'+1)U/2^{l}, b'U/2^{l})$. To see why this claim holds, we have to go through the four possible cases on the parities of $a$ and $b$. We show the most inclusive case where $a$ is even and $b$ is odd, and the other three cases are just similar. In this case, $b-a = 2(b'-a')+1$, so $n = n'+U/2^l$. $\sum_{i=a+1}^{b-1} S^{l+1}_{(i*)}$ follows $X^{*n}$, because it is the sum of the following three independent RVs, $\sum_{i=a'+1}^{b'-1} S^l_{(i*)}$, which follows $X^{*n'}$, $S^{l+1}_{(\tilde{a}*)}$, which follows $X^{*(U/2^{l+1})}$ by \autoref{fact:marginal}, and $S^{l+1}_{(\tilde{b}*)}$, which also follows $X^{*(U/2^{l+1})}$.
	
%	is because the set of terms in
%	$\sum_{i=a+1}^{b-1} S^{l+1}_{(i*)}$ ``minus'' $\sum_{i=a'+1}^{b'-1} S^l_{(i*)}$ is a subset of $\{S^{l+1}_{(\tilde{a}*)}, S^{l+1}_{(\tilde{b}*)}\}$, which 
%	viewed as a random vector is independent of $\sum_{i=a'+1}^{b'-1} S^l_{(i*)}$. 
%	contains the no more than two elements contained in
%	and  
	%contains at most two elements that each belongs to
	%the set of two siblings of $a$ and $b$.  
%	and some of the two siblings, which are mutually independent RVs by the reasoning above. 
%This, combined with  of property (2) , proves the induction claim that . 

% This is the same for the two siblings that might participate in the summation, because of the claim at the start of this proof.

	%can be computed with just one split of the root (the zeroth level). Once $S^1_0$ and $S^1_1$ are known,  they combine to  $\sum_{i=a+1}^{b-1} S^1_{(i*)}$,  $S^1_{(a*)}$, and $S^1_{(b*)}$ for all values of $a$ and $b$.three RVs, $S^l_{(a*)}$, and $S^l_{(b*)}$
\end{proof}

\begin{remark*}
 In this proof, to compute the range-sum $S[a, b) = \sum_{i=a}^{b-1} S^{\log_2 U}_{(i*)}$, at most two splits need to performed at each level $0 \leq l < \log_2 U$, on namely $S^{l}_{(a_l*)}$ and $S^{l}_{(b_l*)}$ (they can be the same node), where $a_l = \lfloor a 2^l/U\rfloor$ and $b_l = \lfloor b 2^l/U\rfloor$.  This implies \autoref{cor:2logntime}.    
\end{remark*}

\section{Related Work}

Since the contribution of this work is a new and practical solution approach to the ERS problem, we focus only on related works on ERS.
%We do not even attempt to survey the two applications areas, namely data streaming or ANNS, because both literatures are vast.
%are just two examples where ERS could be useful, and we do not propose new algorithms in these applications. Therefore, we mainly introduce related works in ERS and ERM, and only briefly mention other works in  data streaming and ANN.
%One possible option is EH3~\cite{feigenbaum}, which is shown by~\cite{rusu-jour} as the fastest range-summable scheme that has both theoretical guarantee (approximate $4$-wise independence) and good empirical performance (when estimating the size of joins). 
%\paragraph{Range-summablility}
The ERS problem was first formulated in~\cite{feigenbaum}.
In~\cite{feigenbaum}, the aforementioned EH3, which is the first ERS solution, was proposed to augment
the AMS sketching~\cite{ams} technique.  The EH3-augmented AMS solves a wide range of new data streaming problems, 
such as estimating the size of spatial joins and the selectivity of histogram buckets, and outperforms previous ad-hoc solutions~\cite{rusu-jour}.
In the ERS literature, the one most related to this work is~\cite{dyahist}.  We have compared our work with~\cite{dyahist} in several places 
%which gave a sketchy idea of dyadic simulation, and that idea has been compared with ours 
throughout this paper. 

All existing ERS solutions except~\cite{dyahist} are proposed for the case in which the target distribution $X$ is a single-step random walk.
Among them, EH3~\cite{feigenbaum} is the best known and has been compared with our dyadic simulation approach in~\autoref{sec:application}.
The ``3'' in EH3 refers to the fact that underlying RVs generated by EH3 are provably $3$-wise independent. 
BCH3~\cite{rusu-jour} is another ERS scheme that also guarantees $3$-wise independence. 
Although BCH3 is faster to compute than EH3, 
%it is even worse suited for augmenting RW-LSH than EH3, 
the underlying RVs generated by BCH3 are even 
more strongly correlated (beyond 4-wise)~\cite{rusu-jour} than those by EH3.  RM7~\cite{calderbank}, which guarantees $7$-wise independence, is the only existing ERS scheme 
that goes beyond $3$-wise, but it is too slow to be practical.  Empirically, RM7 takes more than $26$ milliseconds to compute a single range-sum~\cite{rusu-jour}, whereas for dyadic simulation, the time is typically less than one microsecond as shown in~\autoref{sec:dyaperf}.
%None of them can be used for augmenting MP-RW-LSH, since

Besides performance, another issues for these schemes is that they destroy all empirical independence beyond 4-wise (in the cases of EH3 and BCH3) and 8-wise (in the case of RM7).
For existing ERS solutions, this destruction (of empirical independence) is unavoidable due to the fact that 
they all solve an ERS problem by crafting a ``magic'' hash function that is based on error correction codes.  For example, in RM7 this magic hash function is defined by an 
instantiation of the Reed-Muller (RM) code.  In contrast, in our dyadic simulation approach, the ERS is achieved through a DST 
that does not require any such magic hash function:   For all practical purposes, {\tt wyhash} will do, as explained earlier.
This difference allows our approach to generalize to more target distributions and more applications.

A marginally related range-efficient computing problem to ERS, called efficient range minimizability (ERM), has been studied in the contexts of data streaming
and computational geometry.  In ERM, we would like to efficiently compute the minimum value of the RVs (that each has distribution $X$) in a range.  
%The problem of efficient range-summability can be generalized to other operations. For example, if we can efficiently compute the minimums over a range, we can
An example ERM problem is when $X$ is a uniform distribution in the interval $(0, 1)$.  
We have come up with a new efficient solution to this problem, but cannot include it in this paper in the interest of space.
Any efficient solution (including ours) to this problem can be used, in combination with the MinHash sketch~\cite{minhashcounting}, to solve the 
range-efficient $F_0$ (estimation) problem~\cite{rangeeff, twoimpf0}:  
to efficiently estimate the number of distinct elements ($F_0$) in a data stream with range-updates.
Existing solutions to this problem, such as range-efficient sampling~\cite{rangeeff, twoimpf0}, are sampling-based in the sense they maintain
a select subset of sampled data items instead of a sketch (e.g., accumulators like in~\cite{stabledist}).  
%They work only under the very simple two-wise independent hash function family $h_{a, b}(x) = (ax+b) \mod p$, where $p$ is a prime number and $a, b$ are uniformly %chosen in $\{0, 1, \cdots, p-1\}$, which generates $X_i$ as two-wise independent uniform random variables.
The range-efficient $F_0$ problem has been generalized to high-dimensional spaces, where it is called the Klee's measure problem in computational geometry~\cite{rectangleefficient, klee2015}.  Existing solutions to Klee's measure problem are also sampling based.

\section{Conclusion}\label{sec:conclusion}
In this work, we propose \emph{dyadic simulation}, a novel solution framework to ERS that extends and improves existing frameworks in a fundamental and systematic way. We develop three novel ERS solutions for Gaussian, Cauchy, and single-step random walk distributions. We also propose novel rejection sampling techniques to make these solutions computationally efficient.
% and show that our solutions are better than all existing ones in that the underlying random variables thus generated are mutually independent empirically and can be made provably so at a small cost.
Finally, we develop a novel $k$-wise independence theory of DSTs that provide both high computational efficiency and strong provable
independence guarantees. 
%Finally, we propose two nontrivial applications of our ERS solutions in the areas of data streaming and ANNS.and  computationally efficient 

%In this work, we propose \emph{dyadic simulation}, a novel solution framework to ERS  that extends and improves existing frameworks in a fundamental and systematic way. We develop three novel ERS solutions for different underlying distributions, and show that our solutions are better than all existing ones in that the underlying random variables thus generated are mutually independent empirically and can be made provably so at a small cost.
%We also develop a novel $k$-wise independent theory of DSTs that provide both high computational efficiency and strong provable
%independence guarantees. As novel applications of our ERS solutions, we propose two range-efficient data streaming algorithms for processing streams with arbitrary range-updates, explain
%why EH3-based RW-LSH leads to performance degradation for some queries, and propose Gaussian-DST-augmented GRW-LSH as its replacement that has nearly the same efficacy, 
%requires virtually zero storage, and is usually more efficient computationally.
%we cannot compute random walks using EH3
%\input{sections/appendix}

%\input{sections/proof_theorem_1}
%\input{sections/gaussian_impl}
%%\input{sections/gaussian_exp}
%\input{sections/minhash}
%\input{sections/minhash_impl}

%\input{sections/related_work}

%\input{sections/ack}
\bibliographystyle{plainurl}
\bibliography{bibs/draft}

%\appendix
%\input{sections/application}
%\input{sections/evaluation_simple}
\end{document}